\newtheorem{definition}{Definition}
\newtheorem{lemma}{Lemma}
\newcommand{\remove}[1]{}
\begin{document}

\title{Enumerating Maximal Bicliques from a Large Graph using MapReduce
\footnote{A preliminary version of the paper ``Enumerating Maximal Bicliques from a Large Graph using MapReduce" was accepted at the Proceedings of the 3rd IEEE International Congress on Big Data 2014. }
}

\author{
Arko Provo Mukherjee\thanks{
Department of Electrical and Computer Engineering, Iowa State University.
Email: arko@iastate.edu
}
\and 
Srikanta Tirthapura\thanks{
Department of Electrical and Computer Engineering, Iowa State University.
Email: snt@iastate.edu
}
\thanks{The research reported in this papers is partially supported by
  the HPC equipment purchased through NSF MRI grant number CNS 1229081
  and NSF CRI grant number 1205413.}  
}

\maketitle 

\begin{abstract}
We consider the enumeration of maximal bipartite cliques (bicliques)
from a large graph, a task central to many practical data mining
problems in social network analysis and bioinformatics. We present
novel parallel algorithms for the MapReduce platform, and an
experimental evaluation using Hadoop MapReduce.

Our algorithm is based on clustering the input graph into smaller
sized subgraphs, followed by processing different subgraphs in
parallel. Our algorithm uses two ideas that enable it to scale to
large graphs: (1)~the redundancy in work between different subgraph
explorations is minimized through a careful pruning of the search
space, and (2)~the load on different reducers is balanced through the
use of an appropriate total order among the vertices. Our evaluation
shows that the algorithm scales to large graphs with millions of edges
and tens of millions of maximal bicliques. To our knowledge, this is
the first work on maximal biclique enumeration for graphs of this
scale.
\end{abstract}


\section{Introduction}
\label{sec:intro}

A graph is a natural abstraction to model rich relationships in data,
and massive graphs are ubiquitous in applications such as online
social networks~\cite{MMGDB2007,NWS2002}, information retrieval from
the web~\cite{BKMRRSTW2000}, citation networks~\cite{AJM2004}, and
physical simulation and modeling~\cite{WTCG2012}, to name a few.
Finding information from such data can often be reduced to a problem
of mining features from massive graphs.
We consider scalable methods for discovering densely connected
subgraphs within a large graph. Mining dense substructures such as
cliques, quasi-cliques, bicliques, quasi-bicliques etc. is an
important area of study~\cite{AACFHS2004,GKT2005,ARS2002,SLGL2006}.

A fundamental dense substructure of interest is a {\em biclique}. A
biclique in a graph $G=(V,E)$ is a pair of subsets of vertices
$L\subseteq V$ and $R\subseteq V$ such that (1)~$L$ and $R$ are
disjoint and (2)~there is an edge $(u,v) \in E$ for every $u \in L$
and $v \in R$. For instance, consider the following graph relevant to
an online social network, where there are two types of vertices, users
and webpages. There is an edge between a user and every webpage that
the user ``likes'' on the social network. A biclique in such a graph
consists of a set of users $U$ and a set of webpages $W$ such that
every user in $U$ has liked every page in $W$. Such a biclique
indicates a set of users who share a common interest, and is valuable
for understanding the actions of users on this social network. Often,
it is useful to identify only {\em maximal bicliques} in a graph,
which are those bicliques that are not contained within any other
larger bicliques. {\em We consider the problem of enumerating all
maximal bicliques from a graph (henceforth referred to as MBE)}.

Many graph mining tasks have relied on enumerating bicliques to
identify significant substructures within the graph. For instance the
analysis of web search queries~\cite{YM2009} considered the
``click-through'' graph, where there are two types of vertices, web
search queries and web pages. There is an edge from a search query to
every page that a user has clicked in response to the search
query. MBE was used in clustering queries using the click through
graph. MBE has been used in social network analysis, in detection of
communities in social networks~\cite{LSH2008}, and in finding
antagonistic communities in trust-distrust
networks~\cite{LSZL2011}. It has also been applied in detecting
communities in the web graph~\cite{KRRT1999,RH2005}.

In bioinformatics, MBE has been used widely e.g. in construction of
the phylogenetic tree of
life~\cite{DABMMS2004,SDREL2003,YBE2005,NK2008}, structure discovery
and analysis of protein-protein interaction
networks~\cite{BZCXZLZSLZLC2003,SLL2011}, analysis of gene-phenotype
relationships~\cite{XPH2012}, prediction of miRNA regulatory
modules~\cite{YM2005}, modeling of hot spots at protein
interfaces~\cite{LL2009}, and in analysis of relationships between
genotypes, lifestyles, and diseases~\cite{MKGR2007}. Other
applications include Learning Context Free Grammars~\cite{Y2011},
finding correlations in databases~\cite{J2005}, for data
compression~\cite{AAAAAASS1993}, role mining in role based access
control~\cite{CPOV2010}, and process operation
scheduling~\cite{MGP2011}.

\begin{figure}
\begin{center}
\includegraphics[width=0.5\textwidth]{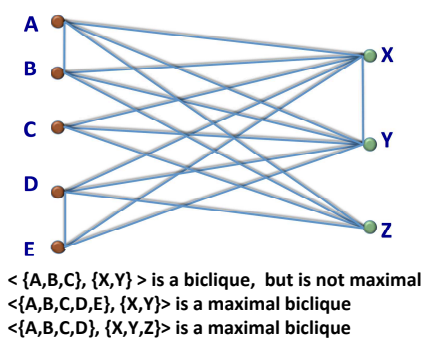}
\caption{
\label{fig:max_biclique}
Maximal Bicliques}
\end{center}
\end{figure}

While it is easy to find a single maximal biclique in a graph,
enumerating all maximal bicliques is an NP-hard problem
(Peters~\cite{P2003}). This does not however mean that typical cases
are unsolvable. In fact, there are output-polynomial time algorithms
whose theoretical runtime is bounded by a polynomial in the number of
vertices in the graph, and the number of maximal bicliques that are
output~\cite{AACFHS2004}. Thus it is reasonable to expect to be able
to devise algorithms for MBE that work on large graphs, as long as the
number of maximal bicliques output is not too high. 

\remove{
Typically most large graphs do not exhibit exponentially large output sizes.  
}

Current methods for enumerating bicliques have the following
drawbacks. Most methods are sequential algorithms that are unable to
make use of the power of multiple processors. For handling large
graphs, it is imperative to have methods that can process a graph in
parallel. Next, they have been evaluated only on small graphs of a few
thousands of vertices and a few hundred thousand maximal bicliques,
and have not been shown to scale to large graphs. For instance, the
popular ``consensus'' method for biclique
enumeration~\cite{AACFHS2004} presents experimental data only on
graphs of up to 2,000 vertices, and about 140,000 maximal bicliques,
and other works~\cite{LLSW2005,LSL2006} are also similar.
\footnote{In our experiments, we show that the consensus and other 
sequential methods are unable to process our input graphs in a 
reasonable time}.
\footnote{It is not possible to quantify the
complexity of a problem instance through the input size (number of
vertices,and edges). However, the number of maximal bicliques, used in
conjunction with the input size, is more indicative of the
complexity.}

{\em Our goal is to design a parallel
method that can enumerate maximal bicliques in large graphs, with
millions of edges and tens of millions of maximal bicliques, and which
can scale with the number of processors.}

\subsection{Contributions}
\label{sec:contribution}

\remove{
We present a framework for a parallel solution to MBE using a
clustering approach that divides the input graph into overlapping
subgraphs that can be processed independently. We implement this
framework on MapReduce~\cite{DG2008}, and evaluate different algorithms
within the framework, based on different sequential algorithms for
MBE.
}

We present a parallel solution for MBE using the MapReduce
framework~\cite{DG2008}. At a high level, our approach clusters the
input graph into overlapping subgraphs that can be processed
independently in parallel, by different reducers.
We implement the clustering approach using two different
state-of-the-art sequential algorithms for MBE, one based on depth
first search~\cite{LSL2006}, and the other based on the consensus
algorithm~\cite{AACFHS2004}.  

For this clustering approach to be effective on large graphs, we
needed to augment it with two ideas that significantly improve the
parallel performance. The first idea is concerned with reducing the
overlap in the work done by different subtasks. It is usually not
possible to assign disjoint subgraphs to different processors, and the
subgraphs assigned to different tasks will overlap, sometimes
significantly. Through a careful partitioning of the search space
among the different tasks, we reduce redundant work among the tasks
(this partitioning depends on details of the sequential algorithm used
at each task).

The second idea is concerned with balancing the load between different
tasks. With a graph analysis task such as biclique enumeration, the
complexity of different subgraphs varies significantly, depending on
the density of edges in the subgraph. Naively done, this can lead to a
case where most reducers finish quickly, while only a few take a long
time, leading to a poor parallel performance. We present a solution to
keep the load more balanced, based on an ordering of vertices, which
reduces enumeration load on subgraphs that are dense, and increases
the load on subgraphs that are sparse, leading to a better parallel
efficiency. 
We provide some basic statistical analysis of the Reducer
runtimes with and without the load balancing to justify our claim.

We give a detailed analysis of the communication costs of the 
clustering based MapReduce Algorithms described.


We present a direct parallelization of the consensus sequential
algorithm~\cite{AACFHS2004}, using an approach different from
clustering. We found that while this approach may use a smaller memory
per node that the clustering approach, it requires substantially
greater runtime.

Finally, we present detailed experimental results on real-world and
synthetic graphs. Overall, the clustering approach (using depth first
search), when combined with load balancing and reduction of redundant
work, performs the best on large graphs. Our algorithms can process
graphs having millions of edges, and tens of millions of maximal
bicliques, and can scale out with the cluster size. To our knowledge,
these are the largest reported graph instances where bicliques have
been successfully enumerated.

We also provide experimental evidence showing that our parallel
Algorithm based on depth first search effectively generates only large
maximal bicliques. For this we show that the runtime of our parallel
algorithm decreases with increase in the size threshold of the generated
maximal bicliques.


\subsection{Prior and Related Work}
\label{sec:literature}
Makino {\em et. al}~\cite{MU2004} describes methods to enumerate all
maximal bicliques in a {\em bipartite graph}, with the delay between
outputting two bicliques bounded by a polynomial in the maximum degree
of the graph. Zhang {\em et. al}~\cite{YEM2008} describe a
branch-and-bound algorithm for the same problem. However, these
approaches do not work for general graphs, as we consider here.

\remove{
describe an
output-polynomial time branch-and-bound algorithm for the same problem
search algorithm having
runtime $O(d \cdot n^2 \cdot \beta )$, where $d$ is the maximum degree
of any vertex, $n$ is the number of vertices and $\beta$ is the number
of maximal bicliques.  Note that these approaches do not work for
general graphs, like we consider here.
}

There is a variant of MBE where we only seek {\em induced} maximal
bicliques in a graph. An induced maximal biclique is a maximal
biclique which is also an induced subgraph; i.e. a maximal biclique
$\langle L,R \rangle$ in graph $G$ is an induced maximal biclique if
$L$ and $R$ are themselves independent sets in $G$.  We consider the
{\em non-induced} version, where edges are allowed in the graph
between two vertices that are both in $L$, or both in $R$ (such edges
are of course, not a part of the biclique). The set of maximal
bicliques that we output will also contain the set of induced maximal
bicliques, which can be obtained by post-processing the output of our
algorithm.  Note that for a bipartite graph, every maximal biclique is
also an induced maximal biclique. Algorithms for Induced MBE include 
work by Eppstein~\cite{E1994}, Dias {\em et. al}~\cite{DFS2005}, and 
Gaspers {\em et. al}~\cite{GKL2008}.

\remove{
 Eppstein~\cite{E1994} gave an
algorithm for induced MBE, which runs in polynomial time for certain
special classes of graphs with bounded arboricity. Dias {\em
et. al}~\cite{DFS2005} present a method for induced MBE, where the
bicliques are output in lexicographic order with polynomial time
delay.  Gaspers {\em et. al}~\cite{GKL2008} present an Algorithm for
induced MBE with runtime $O(1.3642^n)$, where $n$ is the number of
vertices in the graph.
}


\remove{
}

Alexe {\em et. al}~\cite{AACFHS2004} present an iterative algorithm
for non-induced MBE using the ``consensus'' method, which 
we briefly review in Section~\ref{sec:algo}. 
%
Another technique for MBE is based on a recursive depth first search
(DFS)~\cite{LLSW2005,LSL2006}. \cite{LLSW2005} presents an approach
based on a connection with mining closed patterns in a transactional
database, and apply the algorithm from~\cite{UKA2004}, which is based
on depth first search. \cite{LSL2006} present a more direct algorithm
for biclique enumeration based on depth first search, which we use in
our work. This is described in more detail in Section~\ref{sec:algo}.

\remove{
The first work using the DFS approach is based on mining closed
patterns~\cite{LLSW2005}.  The authors provide a modified version of
the DFS LCM algorithm~\cite{UKA2004} which they call LCM-MBC.  The
authors claim that the time complexity of the proposed algorithm is
$O(n \cdot m \cdot N)$ and the space complexity is $O(n \cdot m)$
where $n$ is the number of vertices, $m$ is the number of edges and
$N$, the number of maximal bicliques.  We also have the direct
depth-first search algorithm by Liu {\em et. al}~\cite{LSL2006}.  This
algorithm has a runtime of $O(n \cdot d \cdot N)$ where $n$ is the
number of vertices, $d$ is the maximum degree of any vertex and $N$,
the number of maximal bicliques.  The space complexity of the
algorithm is $O(m + d^2)$.  This algorithm is also described in
Section~\ref{sec:algo}.  We used the second DFS approach rather than
the first one, as the direct DFS approach has a better runtime bound
and also as in the first approach we need to transform the adjacency
matrix of the graph to a transactional database.  For a large graph,
we wanted to avoid using the adjacency matrix as it requires more
space than the adjacency list.
}

Another approach to MBE is through a reduction to the problem of
enumerating maximal cliques, as described by G\'ely {\em
et. al}~\cite{GNS2009}. Given a graph $G$ on which we need to
enumerate maximal bicliques, a new graph $G'$ is derived such that
through enumerating maximal cliques in $G'$ using an algorithm such
as~\cite{TTT2006,TIAS1977}, it is possible to derive the maximal
bicliques in $G$. However, this approach is not practical for large
graphs since in going from $G$ to $G'$, the number of edges in the
graph increases significantly.


\remove{
we define a new graph $G_1 =
(V_1,E_1)$ as follows: Vertex set
$V_1= V\cup V'$, where $V'$ is a copy of vertices in $V$, and
edge set $E_1 = \{ uv' \in V \times V' \mid v \in V, uv \in E \} \cup
\{ uv \in V^2, uv \not \in E \} \cup \{ u'v' \in {V'}^2, uv \not \in E\}$.
It can be shown that $B = \langle L,R \rangle $ is a maximal
biclique in $G$ only if $\{ L \cup R' \}$ and $\{ L' \cup R \}$ are
cliques in $G_1$. Thus, 
Maximal clique enumeration algorithms~\cite{TTT2006,TIAS1977} can be
used for MBE.
}

To our knowledge, the only prior work on parallel algorithms for MBE
is by Nataraj and Selvan~\cite{NS2009}, who use the correspondence
between maximal bicliques and closed patterns~\cite{LLSW2005} to
derive a parallel method for enumerating maximal bicliques. A
significant issue is that~\cite{NS2009} assumes that the input graph
is presented as an adjacency matrix, which is then converted into a
transactional database and distributed among the processors. In
contrast, we do not assume an adjacency matrix, but assume that the
graph is presented as a list of edges. Thus we are able to work on
much larger graphs than \cite{NS2009}; the largest graph that they
consider has 500 vertices and about 9000 edges.

\remove{
For our work, we take a more direct 
approach towards solving the problem and apply depth first search
technique directly on the input graph without converting it into a
transactional database. 
The input to the Algorithm given in~\cite{NS2009} is the
adjacency matrix of the graph. By not converting the graph
into a transactional database, we avoid using the 
adjacency matrix of the graph. This is certainly helpful
for large graphs, as the adjacency matrix can be quite large.
Further, unlike~\cite{NS2009}, our work uses the popular 
MapReduce framework~\cite{DG2008} which has advantages like 
fault tolerance, data distribution etc. and can be run on a
commodity cluster. We show our results on much larger graphs,
some artifically created, while some taken from the real world.
}

MBE is related to, but different from the problem of finding the
largest sized biclique within a graph (maximum biclique). There are a
few variants of the maximum biclique problem, including maximum edge
biclique, which seeks the biclique in the graph with the largest
number of edges, and maximum vertex biclique, which seeks a biclique
with the largest number of edges; for further details and variants,
see Dawande {\em et al.}~\cite{DKST2001}. MBE is harder than finding a
maximum biclique, since it enumerates all maximal bicliques, including
all maximum bicliques.



\section{Preliminaries}
\label{sec:prelim}
We present a formal problem definition, review prior sequential
algorithms, and then briefly review the MapReduce
parallel programming model that we use.


\subsection{Problem Definition}
\label{sec:probdef}
We consider a simple undirected graph $G=(V,E)$ without self-loops or
multiple edges, where $V$ is the set of all vertices and $E$ is the
set of all edges of the graph. Let $n = \left|{V}\right|$ and 
$m = \left|{E}\right|$.
Graph $H=(V_{1},E_{1})$ is said to be a sub-graph of graph $G=(V,E)$ if
$V_{1} \subset V$ and $E_{1} \subset E$. $H$ is known as an
induced subgraph if $E_{1}$ consists of all edges of $G$ that connect
two vertices in $V_{1}$. For vertex $u \in V$, let $\eta(u)$ denote
the vertices adjacent to $u$. For a set of vertices 
$U \subseteq V$, let $\eta(U) = \bigcup \limits _{u \in U} \eta(u)$.
For vertex $u \in V$ and $k > 0$, let $\eta^{k}(u)$ denote all
vertices that can be reached from $u$ in $k$ hops. For 
$U \subseteq V$, let 
$\eta^{k}(U) = \bigcup \limits _{u \in U} \eta^{k}(u)$. 
We call $\eta^{k}(U)$ as the $k$-neighborhood of $U$.
For a set of vertices 
$U \subseteq V$, let $\Gamma(U) = \bigcap \limits _{u \in U} \eta(u)$.


\begin{definition}
\label{def:biclique}
A \emph{biclique} $B = \langle L,R \rangle$ is a subgraph of $G$
containing two non-empty and disjoint vertex sets, $L$ and $R$ such that for any
two vertices $u \in L$ and $v \in R$, there is an edge $(u,v) \in E$.
\end{definition}

Note that the definition on $B=\langle L,R \rangle$ does not impose
any restriction on the existence of edges among the vertices within
$R$ or within $L$, i.e., we consider \emph{non-induced} bicliques.

\begin{definition}
\label{def:maxbiclique}
A biclique $M=\langle L,R \rangle$ in $G$ is said to be a 
maximal biclique if there is no other biclique 
$M'=\langle L',R'\rangle \neq \langle L,R \rangle$
such that $L \subset L'$ and $R \subset R'$.
\end{definition}

\remove{
\begin{definition}
\label{def:star}
A biclique $B = \langle L,R \rangle$ is called a {\em star} if the
left (or right) set is a single vertex $v \in V$ and the other set is
$\eta(v)$.
\end{definition}
}

The {\em \bf Maximal Biclique Enumeration Problem (MBE)} is to
enumerate the set of all maximal bicliques in graph $G=(V,E)$.

In our algorithms, we assume that vertex identifiers are unique and
are chosen from a totally ordered set. This is usually not a limiting
assumption, since vertex identifiers are usually strings which can be
ordered using the lexicographic ordering.

\subsection{Sequential Algorithms}
\label{sec:algo}
We describe the two approaches to sequential algorithms for MBE that
we consider, one based on a ``consensus algorithm''~\cite{AACFHS2004},
and the other based on depth first search~\cite{LSL2006}.

\paragraph{Consensus Algorithm}
Alexe {\em et. al}~\cite{AACFHS2004} present an iterative approach to
MBE. This type of algorithm starts off with a set of simple ``seed''
bicliques.  In each iteration, it performs a ``consensus'' operation,
which involves performing a cross-product on the set of current
candidates bicliques with the seed bicliques, to generate a new set of
candidates, and the process continues until the set of candidates does
not change anymore. Due to lack of space, we do not present the
details here, and refer the reader to~\cite{AACFHS2004}. It is proved
that these algorithms exactly enumerate the set of maximal bicliques
in the input graph.

The consensus approach has a good theoretical guarantee, since its
runtime depends on the number of maximal cliques that are output.  In
particular, the runtime of the MICA version of the algorithm is proved
to be bounded by $O\left (n^3 \cdot N \right)$ where $n$ is the number
of vertices and $N$ total number of maximal bicliques in $G$.
The consensus algorithm has been found to be adequate for many
applications and is quite popular.

We use the consensus algorithm in two ways. One as a candidate method
for a sequential algorithm within each cluster. In another, we
consider a direct parallelization of the consensus algorithm without
using the clustering method.


\remove{
It first generates a set of seed bicliques.  In each step of iteration
it does a consensus cross between all the bicliques of the seed set
with all bicliques from the output of the previous set.  The consensus
operation is defined on two bicliques as follows: Given two bicliques
$B_{1} = \langle X_{1},Y_{1} \rangle$ and $B_{2} = \langle X_{2},Y_{2}
\rangle$ we can use the consensus method to find up to four new
bicliques as follows: $B_{A} = \langle X_{1} \cup X_{2}, Y_{1} \cap
Y_{2} \rangle$, $B_{B} = \langle X_{1} \cup Y_{2}, Y_{1} \cap X_{2}
\rangle$, $B_{C} = \langle Y_{1} \cup X_{2}, X_{1} \cap Y_{2} \rangle$
and $B_{D} = \langle Y_{1} \cup Y_{2}, X_{1} \cap X_{2} \rangle$.
Figure~\ref{fig:cons} shows the consensus operation graphically.  The
bicliques generated by consensus might not be maximal and hence can be
extended as follows: From a biclique $B = \langle X,Y \rangle$, we can
generate two maximal bicliques via extension: $M_{1} =\langle \eta
(X), \eta (\eta (X)) \rangle$ and $M_{2} = \langle \eta (Y), \eta
(\eta (Y)) \rangle$.
}


\remove{
\begin{figure*}[!ht]
\begin{center}
\includegraphics[width=0.5\textwidth]{consensus.eps}
\caption{Consensus cross between two bicliques producing four new bicliques} \label{fig:cons}
\end{center}
\end{figure*}

\begin{algorithm} 
\caption{Sequential Consensus Algorithm} 
\label{SCA}

Load Graph $G = (V,E)$ \;
$STAR \leftarrow $ Collection of all Stars in $G$ as per Definition~\ref{def:star} \;
$SEED \leftarrow \varnothing$ \;

\ForAll{$b \in STAR$}
{
	$m \leftarrow $ Extend $b$ \;
	$SEED \leftarrow $ $SEED \cup m$ \;	
}

$OUTPUT \leftarrow SEED$; \tcp{Add seed set to the output}   
$PREV \leftarrow SEED $;  \tcp{Initialize set PREV with SEED}

\Repeat{$NEW$ is EMPTY}
{
	$TEMP \leftarrow $ Consensus between all maximal bicliques in $SEED$ and $PREV$ \;
	$RESULT \leftarrow \varnothing$ \;
	\ForAll{$b \in TEMP$}
	{
		$m \leftarrow $ Extend biclique $b$ \;
		\If{$m$ is not a duplicate}
		{
			$RESULT \leftarrow RESULT \cup m$ \;
		}
	}
	$OUTPUT \leftarrow OUTPUT \cup RESULT$ \;
	$PREV \leftarrow RESULT$ \;
}
\end{algorithm}
}

\paragraph{Sequential DFS Algorithm}

The basic sequential depth first approach (DFS) that we use is
described in Algorithm~\ref{SPA}, based on~\cite{LSL2006}. It attempts
to expand an existing maximal biclique into a larger one by including
additional vertices that qualify, and declares a biclique as maximal
if it cannot be expanded any further. The algorithm takes the
following inputs: (1)~the graph $G=(V,E)$, (2)~the current
vertex set being processed, $X$, (3)~$T$, the tail vertices of
$X$, i.e. all vertices that come after $X$ in lexicographical
ordering and
(4)~$s$, the minimum size threshold below which a 
maximal biclique is not enumerated. $s$ can be set to $1$ so as to 
enumerate all maximal bicliques in the input graph.
However, we can set $s$ to a larger value to enumerate only large
maximal bicliques such that for $B = <L,R>$, we have 
$ \left | L \right | \ge s$ and $ \left | R \right | \ge s$.
The size threshold $s$ is provided as user input.
The other inputs are initialized as follows: $X=\varnothing$, $T=V$.

The algorithm recursively searches for maximal bicliques. It  increases
the size of $X$ by recursively 
adding vertices from the tail set $T$, and pruning away 
those vertices from $T$ which along with $X$ 
do not have any any common vertices in their neighborhood.
From the expanded $X$, the algorithm outputs the maximal biclique
$\left < \Gamma(\Gamma(X)), \Gamma(X) \right >$.

\begin{algorithm} 
\caption{Depth First Search: PA($G$,$X$,$T$,$s$)} 
\label{SPA}

	\ForAll{vertex $v \in T$}
	{
		\If{$\left | \Gamma(X \cup \{ v \}) \right | < s$}
		{
			$T \leftarrow T \setminus \{ v \}$ \;
		}
	}
	
	\If{$\left | X \right | + \left | T \right | < s$}
	{
		\Return
	}
	
	Sort vertices in $T$ as per ascending order of $\left | \eta(X \cup \{ v \}) \right |$ \;
	
	\ForAll{vertex $v \in T$}
	{
		$T \leftarrow T \setminus \{ v \}$ \;
		
		\If{$\left | X \cup \{ v \} \right | + \left | T \right | \ge s$}
		{

			$N \leftarrow \Gamma( X \cup \{ v \} )$ \;
			$Y \leftarrow \Gamma( N )$ \; 
			Biclique $B \leftarrow \left < Y, N \right >$ \;

			\If{$ ( Y \setminus ( X \cup \{ v \} ) ) \subseteq T $}
			{
				\If{$\left | Y \right | \ge s$}
				{
					Emit $B$ as a maximal biclique \; 
				}	

				PA($G$, $Y$, $T \setminus Y$, $s$) \;
			}

		}
	}
\end{algorithm} 

\subsection{MapReduce}
\label{sec:model}

MapReduce~\cite{DG2008} is a popular framework for processing large
data sets on a cluster of commodity hardware. A MapReduce program is
written through specifying map and reduce functions. The map
function takes as input a key-value pair $\langle k,v\rangle $ and
emits zero, one, or more new key-value pairs $\langle k',v'\rangle$. 
All tuples with the same value of the key are grouped together and
passed to a reduce function, which processes a particular key $k$
and all values that are associated with $k$, and outputs a final list
of key-value pairs. The outputs of one MapReduce round can be the
input to the next round. 
Communication happens only when the outputs from the map methods
are retrieved by the different reduce methods based on the key, i.e.
when data is grouped by keys.  Further details are available
in~\cite{DG2008,GGL2003}.  We used Hadoop~\cite{W2009,SKRC2010}, an
open source implementation of MapReduce, on top of a distributed file
system HDFS.  While we consider the MapReduce framework for this work,
our algorithms are generic and can be used with other distributed
frameworks like Pregel~\cite{MABDHLC2010}.

\remove{
The input to the
algorithm is processed by the map method and it emits a sequence of
key / value pairs $\langle k_i,v_i \rangle$.  Then the supporting
system groups the values according to the keys and sorts them. Thus
for each key $k_{i} \subset K$, the system generates $ \langle
k_{j},list(v_{j}) \rangle $. This is then passed on to the reduce
method as the input. The output from the reduce method can then be
used as the input to the next round.  Multiple map instances (all part
of the same round) can run in parallel, as can multiple reduce
instances.}


\section{MapReduce Algorithms for MBE}

In this section, we describe algorithms for MBE using MapReduce.  We
first present the basic clustering approach, which can be used with
any sequential algorithm for MBE, followed by enhancements to the
basic clustering approach, and finally the parallel consensus approach.

\subsection{Basic Clustering Approach}
\label{sec:part}
We first present the basic clustering framework for parallel MBE. The
approach is to cluster the input graph into several overlapping
sub-graphs (clusters) and then run the sequential DFS algorithm in
parallel for each cluster.

For each $v\in V$, the cluster $C(v)$ consists of the induced subgraph
on all vertices in $\eta^2(v)$ (i.e. the 2-neighborhood of $v$ in
$G$). The different clusters $C(v)$ are processed in parallel, and a sequential MBE
algorithm is used to enumerate the maximal bicliques from each
cluster. While all maximal bicliques in $G$ are indeed output by this
approach, the same biclique maybe enumerated multiple times.  To
suppress duplicates, the following strategy is used: a maximal
biclique $B$ arising from cluster $C(v)$ is emitted only if $v$ is the
smallest vertex in $B$ according to the total order of the vertices.
The basic clustering framework is generic and can be used with any
sequential algorithm for MBE. We have considered the clustering
algorithm using the DFS and the consensus algorithms for MBE.


\remove{
For a given input graph $ G = (V,E) $ where 
$ n = \left|{V}\right| $ we construct $n$ clusters. For each vertex $v$ in the
graph, we create a cluster consisting of the induced sub--graph created by
the 2--neighborhood of $v$.
For a vertex $v$, we call the corresponding cluster as $C(v)$.
Each such cluster $C(v)$ is processed by a separate reducer.
It can be shown that all maximal bicliques can be enumerated using this
approach. However, since each cluster is processed independently,
any maximal biclique will get generated for multiple clusters.
For a maximal biclique $M= \langle L,R \rangle$, let $V_M$ represent all the vertices in $M$.
It can  be shown that the maximal biclique $M$ will get enumerated
by all the clusters $C(v)$ such that $v \in V_M$.
We eliminate the duplicate maximal bicliques by the following strategy :
A maximal biclique $M$ is emitted by a cluster $C(v)$, only if
$v$ is the smallest vertex in $M$.
}

\begin{lemma} 
\label{lemma:part} 
The basic clustering approach enumerates all maximal bicliques in 
graph $G=(V,E)$.
\end{lemma}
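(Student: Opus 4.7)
The plan is to prove completeness: for every maximal biclique $M = \langle L, R\rangle$ of $G$, some cluster produces $M$ as a maximal biclique and actually emits it under the duplicate-suppression rule. Let $v^\star$ be the smallest vertex of $V_M = L \cup R$ under the total vertex order. Without loss of generality take $v^\star \in L$; the case $v^\star \in R$ is completely symmetric since bicliques are unordered pairs.

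First I would verify that $L \cup R$ is contained in the vertex set of $C(v^\star)$, the induced subgraph on $\eta^2(v^\star)$. Every $w \in R$ is adjacent to $v^\star$ (because $v^\star \in L$ and $M$ is a biclique), so $w \in \eta(v^\star) \subseteq \eta^2(v^\star)$. For any other $u \in L$, pick any $w \in R$, which exists by Definition~\ref{def:biclique}; then $u \in \eta(w)$ and $w \in \eta(v^\star)$, so $u \in \eta^2(v^\star)$. The vertex $v^\star$ itself also lies in $\eta^2(v^\star)$ via any such $w$. Because $C(v^\star)$ is the induced subgraph on $\eta^2(v^\star)$, every edge of $M$ between $L$ and $R$ is preserved, so $M$ is a biclique of $C(v^\star)$.

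Next I would establish maximality within $C(v^\star)$: any strict extension of $M$ inside $C(v^\star)$ is in particular a biclique of $G$, contradicting the maximality of $M$ in $G$. Consequently the sequential MBE subroutine executed on $C(v^\star)$ enumerates $M$. Because $v^\star$ is by construction the smallest vertex of $V_M$, the duplicate-suppression rule allows $C(v^\star)$ to emit $M$. Applying this argument to every maximal biclique of $G$ finishes the proof. (A companion observation, useful later, is that every biclique emitted is actually maximal in $G$, since an extension of an emitted biclique containing $v^\star$ would have to attach a vertex adjacent either to $v^\star$ directly or to some vertex of the opposite side already in $\eta(v^\star)$, hence lying in $\eta^2(v^\star)$.)

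The main subtlety is the containment $L \cup R \subseteq \eta^2(v^\star)$; this is precisely the point at which the choice of a \emph{2-hop} (rather than 1-hop) neighborhood in the clustering definition becomes essential, and it depends crucially on the non-emptiness of the opposite side $R$ to certify a 2-hop path from $v^\star$ to any vertex in its own side $L$. Once this containment is in hand, everything else reduces to the fact that taking an induced subgraph can only shrink, never grow, the set of extensions of a biclique.
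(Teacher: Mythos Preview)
Your proof is correct and follows essentially the same approach as the paper: pick the smallest vertex $v^\star$ of a maximal biclique, show both sides lie in $\eta^2(v^\star)$ so the biclique sits inside $C(v^\star)$ and remains maximal there, and note the duplicate-suppression rule then emits it. The only cosmetic difference is that the paper treats the converse direction (every emitted biclique is maximal in $G$) as a second, co-equal part of the proof rather than a parenthetical observation.
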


\begin{proof}
  We show the following two properties. First, every maximal biclique
  in $G$ must be output as a maximal biclique from cluster $C(v)$ for
  some $v \in V$. Second, every maximal biclique output from each
  cluster must be a maximal biclique in $G$.
  To prove the first direction, consider a maximal biclique $M=\langle
  L,R\rangle$ in $G$. Let $v$ be the smallest vertex in $M$ in
  lexicographic order, and without loss of generality suppose that $v
  \in L$. By the definition of a biclique, for each $u \in R$, $u$ is
  a neighbor of $v$. Similarly, every vertex $w \in L$ is a neighbor
  of $v$, and is hence in $\eta^2(v)$. Hence $M$ is
  completely contained in $C(v)$. Note that $M$ is also a maximal
  biclique in $C(v)$. To see this, note that if $M$ is not maximal
  biclique in $C(v)$, then $M$ is not maximal in $G$ either.

  We prove by contradiction that every maximal biclique in each
  cluster $C(v)$ is also a maximal biclique in $G$. Consider a
  biclique $M$ emitted as maximal from cluster $C(v)$ such that it is
  not maximal in $G$. Then, there exists a maximal biclique $M'$ that
  can be generated by extending $M$. However, it is easy to see that
  every vertex in $M'$ must also be contained in $\eta^2(v)$, 
and hence $M'$ is also contained in $C(v)$, contradicting our
  assumption that $M$ is a maximal biclique in $C(v)$.
\end{proof}

\remove{
Without loss in generality, let $v \in M$ be
  the smallest vertex in $M$.  and let $v$ be in the left set $L$. Now
  from the definition of a biclique for all vertices $u \in R$, $u$
  must be a neighbor of $v$ i.e. $u$ must be in 1-neighborhood of $v$.
  Similarly, any vertex $w \in L$ must be in 1-neighborhood of $u$.
  This implies that for any vertex $w \in L$, $w$ must be in
  2-neighborhood of $v$. Hence $M$ will be a biclique in $C(v)$.
  Since $M$ is a maximal biclique in $G$, and $C(v)$ is an induced
  sub--graph of $G$, $M$ will also be a maximal biclique in $C(v)$.
  This is because, if there exists a vertex $z \in C(v)$ that can
  extend $M$, then, $M$ will not be a maximal biclique in $G$.

  Next we show the second direction. 
  That is we need to show that every maximal biclique in a cluster
  is also a maximal biclique in $G$.
  We prove by contradiction.
  Consider a biclique $B$ emitted as a maximal biclique by clusters $C(v)$ 
  such that it is not maximal in $G$.
  Then there exists a maximal biclique $M'$ that can be generated by
  extending $B$. However, we know that all edges in $M'$ must also be
  contained in $C(v)$ as all of the extra edges must also be
  in 2--neighborhood of $v$. Thus $B$ will generate $M'$ upon
  extension. This contradicts our assumption that $B$ was
  generated.
}

There are two problems with the basic clustering approach described
above. First is {\em redundant work}. Although each maximal biclique
in $G$ is emitted only once, it may still be generated multiple times,
in different clusters. This redundant work significantly adds to the
runtime of the algorithm.
Second is an {\em uneven distribution of load} among different
subproblems. The load on subproblem $C(v)$ depends on two factors, the
complexity of cluster $C(v)$ (i.e. the number and size of maximal
bicliques within $C(v)$) and the position of $v$ in the total order of
the vertices. The earlier $v$ appears in the total order, the greater
is the likelihood that a maximum biclique in $C(v)$ has $v$ has its
smallest vertex, and hence the greater is the responsibility for
emitting bicliques that are maximal within $C(v)$. Using a
lexicographic ordering of the vertices may lead to a significantly
increased workload for clusters of lower numbered vertices and a
correspondingly low workload for clusters of higher numbered vertices.

\remove{Now if the redundant work is eliminated, we face a second
  issue.  The strategy to eliminate duplicates results in improper
  work distribution among the vertices. The strategy is to make the
  cluster $C(v)$ emit a maximal biclique $M$ if $v$ is the smallest
  vertex in $M$. Smaller vertices in $G$ will be the smallest vertex
  for more maximal bicliques than the larger vertices.  Hence, the
  smaller vertices will have to do more work than the smaller vertices
  resulting in unequal work distribution. As we will see, solving this
  problem also results in significant performance gains over the
  algorithm without Load Balancing.  The various algorithms designed
  for MBE are shown in Table~\ref{table:algo}.  CDFS represents the
  basic Clustering DFS approach. CD0 represents the DFS Algorithm with
  optimizations to remove the redundant work.  Finally algorithms CD1
  and CD2 represent the two load balancing approaches used.  As a
  notation we use CDL to denote both CD1 and CD2.  }

\begin{table*}
\caption{Different versions of Parallel Algorithms based on Depth First Search (DFS)}
\label{table:algo}
\centering
\begin{tabular}{c c} 
Label & Algorithm \\ [0.5ex]
\hline\hline 
CDFS & Clustering based on Depth First Search (DFS) \\
CD0 & CDFS + Reducing Redundant Work, without Load Balancing \\
CD1 & CDFS + Reducing Redundant Work + Load Balancing using Degree \\
CD2 & CDFS + Reducing Redundant Work + Load Balancing using Size of 2-neighborhood \\
\hline
\end{tabular}
\end{table*}

\remove{
\begin{figure*}
\begin{center}

\begin{tikzpicture}[node distance = 2cm, auto]

\node [block] (GALM) {Generate Adjacency List (Map)};
\node [block, right of=GALM] (GALR) {Generate Adjacency List (Reduce)};
\node [block, right of=GALR] (NCM) {Generate Two Neighborhood (Map)};
\node [block, right of=NCM] (NCR) {Generate Two Neighborhood and run sequential algorithm (Reduce)};

\path [line] (GALM) -- (GALR);
\path [line] (GALR) -- (NCM);
\path [line] (NCM) -- (NCR);

\end{tikzpicture}

\caption{Schematic for Clustering Approach} \label{schematic:clustering}
\end{center}
\end{figure*}
}

\subsection{Reducing Redundant Work}

In order to reduce redundant work done at different clusters, we
modify the sequential DFS algorithm for MBE that is executed at each
reducer. We first observe that in cluster $C(v)$, the only maximal
bicliques that matter are those with $v$ as the smallest vertex; the
remaining maximal bicliques in $C(v)$ will not be emitted by this
reducer, and need not be searched for here. We use this to prune
the search space of the sequential DFS algorithm used at the reducer.

All search paths in the algorithm which lead to a maximal biclique
having a vertex less than $v$ can be pruned away. Hence, before
starting the DFS, we prune away all vertices in the Tail set that are
less than $v$, as described in Algorithm~\ref{NCR_DFS}. Also, in DFS
Algorithm~\ref{CPA}, we prune the search path in Line 12 if the
generated neighborhood contains a vertex less than $v$ -- maximal
bicliques along this search path will not have $v$ as the smallest
vertex. Finally in Line 19 of Algorithm~\ref{CPA}, we emit a maximal
biclique only if the smallest vertex is the same as the key of the
reducer in Algorithm~\ref{NCR_DFS}.

The above algorithm, the ``optimized DFS clustering algorithm'', or
``CD0'' for short, is described in Algorithm~\ref{algo:CD0}. This
takes two rounds of MapReduce.  The first round, described in
Algorithms~\ref{GALM} (map) and~\ref{GALR} (reduce), is responsible
for generating the 1-neighborhood for each vertex. The second round,
described in Algorithms~\ref{NCM} (map) and~\ref{NCR_DFS} (reduce)
first constructs the clusters $C(v)$ and runs the optimized sequential
pruning algorithm at the reducer. 
Note that Algorithm~\ref{NCR_DFS} passes the size threshold $s$
while calling the optimized DFS Algorithm~\ref{CPA}. 
The size threshold is an user input and can be passed on to 
Reducer (Algorithm~\ref{NCR_DFS}) by using the \emph{Configuration }
parameters of Hadoop. Like the sequential algorithm,
this parameter can be set to $1$ to enumerate all maximal bicliques 
and to a larger value to  enumerate only large maximal bicliques.

\begin{algorithm} 
\caption{Algorithm CD0} 
\KwIn{Edge List of $G=(V,E)$}
\label{algo:CD0}
Generate Adjacency List (Map) -- Algorithm~\ref{GALM} \;
Generate Adjacency List (Reduce) -- Algorithm~\ref{GALR} \;
Create Two Neighborhood (Map) -- Algorithm~\ref{NCM} \;
Create Two Neighborhood (Reduce) -- DFS -- Algorithm~\ref{NCR_DFS} \;
\end{algorithm} 

\begin{algorithm} 
\caption{Generate Adjacency List -- Map} 
\KwIn{Edge $(x,y)$}
\label{GALM}
Emit ($key \leftarrow x$,$value \leftarrow y$) \;
Emit ($key \leftarrow y$,$value \leftarrow x$) \;
\end{algorithm} 

\begin{algorithm} 
\caption{Generate Adjacency List -- Reduce}
\KwIn{$key \leftarrow v$,$value \leftarrow $\{Neighbors of $v$\}}
\label{GALR}
$neighborhood \leftarrow \varnothing$ \;
\ForAll{$val \in value$}
{
	$neighborhood \leftarrow neighborhood \cup val$ \;
}
$N \leftarrow \langle v,neighborhood \rangle $  \;
Emit ($key \leftarrow \varnothing$,$value \leftarrow N$) \;
\end{algorithm}

\begin{algorithm} 
\caption{Create Two Neighborhood -- Map}
\KwIn{$N \leftarrow \langle v,neighborhood \rangle$}
\label{NCM}
Emit ($key \leftarrow v$,$value \leftarrow N$) \;
\ForAll{$y  \in neighborhood$}
{
	Emit ($key \leftarrow y$,$value \leftarrow N$) \;
}
\end{algorithm}

\remove{
\begin{algorithm} 
\caption{Create Two Neighborhood (Consensus) -- Reduce} 
\KwIn{$key \leftarrow vertex$, $value \leftarrow$ \{Collection of neighborhoods\} )}
\label{NCR_Cons}
$G'=(V',E') \leftarrow $ Induced 2--neighborhood sub--graph created from $value$ \;
$OUT \leftarrow $ Output set of maximal bicliques from sequential Algorithm \;
\ForAll{$biclique$ contained in set $OUTPUT$}
{
	$min \leftarrow $ minimum vertex in $biclique$ \;
	\If{$key$ = $min$}  
	{  
		Emit ($key \leftarrow \varnothing$,$value \leftarrow biclique$); \tcp{Duplicate elimination} 
	}
}
\end{algorithm}
}

\begin{algorithm}
\caption{Create Two Neighborhood (DFS) -- Reduce} 
\KwIn{$key \leftarrow v$, $value \leftarrow$ \{2--neighborhood of $v$\} }
\label{NCR_DFS}
$G'=(V',E') \leftarrow $ Induced subgraph on $\eta^2(v)$ \;
$X \leftarrow \varnothing$ \;
$T \leftarrow V'$ \;
\ForAll{vertex $t \in T$}
{
	\If{$t < key$}
	{
			$T \leftarrow T \setminus \{ t \}$ \;
	}
}
CD0\_Seq($G'$, $X$, $T$, $key$, $s$) \;
\end{algorithm}

\begin{algorithm} 
\caption{Optimized DFS -- CD0\_Seq} 
\KwIn{$G'$,$X$,$T$,$key$,$s$}
\label{CPA}

	\ForAll{vertex $v \in T$}
	{
		\If{$\left | \Gamma(X \cup \{ v \}) \right | < s$}
		{
			$T \leftarrow T \setminus \{ v \}$ \;
		}
	}
	
	\If{$\left | X \right | + \left | T \right | < s$}
	{
		\Return
	}
	
	Sort vertices in $T$ as per ascending order of $\left | \Gamma(X \cup \{ v \}) \right |$ \;
	
	\ForAll{vertex $v \in T$}
	{
		$T \leftarrow T \setminus \{ v \}$ \;
		
		\If{$\left | X \cup \{ v \} \right | + \left | T \right | \ge s$}
		{

			$N  \leftarrow \Gamma( X \cup \{ v \} )$ \;
			$Y \leftarrow \Gamma( N )$ \; 

			\If{ $Y$ contains vertices smaller than $key$}
			{
				continue \;
			}

			Biclique $B \leftarrow \left < Y, N \right >$ \;

			\If{$ ( Y \setminus ( X \cup \{ v \} ) ) \subseteq T $}
			{
				\If{$\left | Y \right | \ge s$}
				{
					$v_s \leftarrow$ Smallest vertex in $B$ \;
					\If{$v_s = key$}  
					{
                                               \tcp{Maximal biclique found}
						Emit ($key \leftarrow \varnothing$,$value \leftarrow B$) \;  
					}
				}	

				CD0\_Seq($G'$, $Y$, $T \setminus Y$, $key$, $s$) \;
			}

		}
	}

\end{algorithm} 

\remove{
We eliminate duplicate work as follows.
Without loss of generality, consider a cluster $C(v)$.
Only those maximal bicliques having minimum vertex as $v$ will be 
enumerated from $C(v)$.
This means that all the search paths in the parallel DFS Algorithm~\ref{CPA}
which lead to maximal bicliques having smallest vertex less than $v$
in the lexicographical ordering can be pruned to avoid the redundant work.
This would ensure that all the search paths lead to maximal bicliques
having smallest vertex $v$ and thus should be enumerated.
Hence before starting the DFS, we prune out all the vertices in the Tail set
that are less than $v$. This is done in Algorithm~\ref{NCR_DFS}.
Also, in the parallel DFS Algorithm~\ref{CPA}, we prune the search path
in Line 12 if the generated neighborhood information contains a vertex
less than $v$. This is because, the generated maximal biclique from this
neighborhood information will have smallest vertex less than $v$ and thus
will not be enumerated by the algorithm while processing $C(V)$.
Finally in Line 19 of Algorithm~\ref{CPA}, we emit a maximal biclique
only if the smallest vertex is the same as the key of the reducer in Algorithm~\ref{NCR_DFS}.
}




\begin{lemma}
\label{lemma:DFS_Opt}
No maximal biclique in G is emitted by more than one reducer in  
Algorithm~\ref{CPA}.
\end{lemma}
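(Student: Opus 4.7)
The plan is to argue uniqueness via the smallest-vertex emit condition, which funnels each maximal biclique to a single reducer.

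First, I would recall the setup: in the second MapReduce round, each reducer is invoked on a distinct key $v \in V$, and Algorithm~\ref{NCR_DFS} calls \texttt{CD0\_Seq} with this $v$ as the $key$ argument. Inside Algorithm~\ref{CPA}, a candidate biclique $B$ is emitted only when the check $v_s = key$ passes, where $v_s$ is the smallest vertex in $B$ under the fixed total order on $V$. Thus the emit condition binds each emission to exactly one value of $key$.

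Second, I would use uniqueness of the minimum vertex. For any maximal biclique $M = \langle L, R\rangle$ in $G$, the vertex set $L \cup R$ is finite and nonempty, and since identifiers come from a totally ordered set (as assumed in Section~\ref{sec:probdef}), $M$ has a unique smallest vertex $v^\star$. Consequently, the only reducer whose key can satisfy $v_s = key$ for $M$ is the reducer with key $v^\star$, i.e.\ the one processing cluster $C(v^\star)$. Every other reducer, by definition of its key, will fail the check in Algorithm~\ref{CPA} and will therefore not emit $M$.

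Third, I would handle the within-reducer case: no reducer emits the same maximal biclique twice along different recursive branches of \texttt{CD0\_Seq}. This is inherited from the correctness of the underlying sequential DFS (Algorithm~\ref{SPA}): the tail-set $T$ is monotonically shrinking along any root-to-leaf path, vertices are removed from $T$ before descending, and the recursive call is guarded by $(Y \setminus (X \cup \{v\})) \subseteq T$, which guarantees that the biclique $\langle \Gamma(\Gamma(X\cup\{v\})),\Gamma(X\cup\{v\})\rangle$ is produced from exactly one $(X,v)$ pair in the search tree. The additional pruning introduced in \texttt{CD0\_Seq} (removing vertices $< key$ and skipping paths whose $Y$ contains a vertex $<key$) can only discard search branches, so it cannot create new duplicates.

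The main obstacle is the third point: one must be sure that the extra pruning introduced for redundancy reduction does not inadvertently allow the same maximal biclique to be reached via two distinct surviving branches. I expect this to follow directly by the structural invariant that \texttt{CD0\_Seq} inherits from Algorithm~\ref{SPA}, namely that each maximal biclique in the cluster corresponds to a unique canonical extension path determined by the ordering of $T$, and that the new pruning rules act only as monotone filters on that path.
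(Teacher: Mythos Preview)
Your first two points are exactly the paper's argument: the emit guard $v_s = key$ together with the uniqueness of the minimum vertex in $L \cup R$ immediately gives that at most one reducer can emit a given maximal biclique. That is all the paper proves, and it suffices for the lemma as stated, which concerns only emission by \emph{more than one reducer}.

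Your third point, on within-reducer non-duplication, is extra: the lemma does not claim that a single reducer emits each biclique at most once, and the paper's proof does not address this. Your argument there is plausible but not needed here; if anything it belongs with the overall correctness discussion (the paper folds such considerations into Lemma~\ref{lemma:DFS}). So your proposal is correct and matches the paper's route, with an additional (and unnecessary for this lemma) within-reducer uniqueness step.
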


\begin{proof}
Without the loss in generality, consider any maximal biclique 
$M~=~<L,R>$. Let $a \in \{ L \cup R \}$ be the smallest vertex
in $\{ L \cup R \}$. Consider the reducer with $key = a$.
In Line 20 of Algorithm~\ref{CPA}, a maximal biclique
is emitted only if the condition in line 18 is satisfied.
This condition is satisfied by the reducer with $key = a$.
However, this condition is not satisfied for any reducer
such that $key \neq a$. Thus maximal biclique $M$ is emitted
only for the reducer with $key = a$. 
\end{proof}

\begin{lemma}
\label{lemma:DFS}
Algorithm~\ref{algo:CD0} generates all maximal bicliques in a graph.
\end{lemma}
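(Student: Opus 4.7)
My plan is to reduce the claim to Lemma~\ref{lemma:part}. That lemma already shows that the basic clustering approach emits every maximal biclique, namely via the reducer whose key equals the smallest vertex of the biclique. All that remains is to verify that the extra pruning wired into Algorithms~\ref{NCR_DFS} and~\ref{CPA} (the initial prefiltering of $T$, the neighborhood-based ``continue'', and the emission guard that compares $v_{s}$ with $key$) never removes the search path leading to a given maximal biclique.

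First, I would fix an arbitrary maximal biclique $M = \langle L,R\rangle$ of $G$, denote its smallest vertex by $v^{*}$, and assume without loss of generality that $v^{*} \in L$. I would then focus on the reducer invoked with $key = v^{*}$. The preprocessing loop in Algorithm~\ref{NCR_DFS} drops from $T$ everything smaller than $v^{*}$, but since $v^{*}$ is by definition the smallest vertex of $L \cup R$, no vertex of $M$ is removed; hence $M$ is still present, and still a maximal biclique, in the induced subgraph actually searched by the reducer.

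The main step would be to show that the early-termination guard of Algorithm~\ref{CPA} (the one that prunes whenever the recomputed $Y = \Gamma(\Gamma(X \cup \{v\}))$ contains a vertex smaller than $key$) never fires along the canonical DFS path that builds $M$. On that path the current closed set $X \cup \{v\}$ is contained in $L$, so $N = \Gamma(X \cup \{v\}) \supseteq R$ and therefore $Y = \Gamma(N) \subseteq \Gamma(R)$. Maximality of $M$ forces $\Gamma(R) = L$ (any additional common neighbor of $R$ could be added to $L$, contradicting maximality), so $Y \subseteq L$ and every element of $Y$ is $\geq v^{*} = key$; the ``continue'' is thus never triggered. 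When the recursion finally reaches $X \cup \{v\} = L$, we obtain $N = R$ and $Y = L$, so $B = M$, and the smallest vertex of $B$ equals $v^{*} = key$, meaning the emission guard is satisfied and $M$ is emitted.

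The hard part is precisely the observation $Y \subseteq L$, which hinges on the maximality identity $\Gamma(R) = L$; once this is in hand, the rest is routine bookkeeping over the pruning conditions. Uniqueness of the emission across reducers is then a direct appeal to Lemma~\ref{lemma:DFS_Opt}, completing the argument.
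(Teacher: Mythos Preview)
Your argument is correct but proceeds along a different axis than the paper's. The paper's proof is invariant-based: it uses the observation that $X$ is monotone non-decreasing along any recursion path (because $Y=\Gamma(\Gamma(X\cup\{v\}))\supseteq X\cup\{v\}$), so once a vertex smaller than $key$ enters $Y$ it persists in every descendant, and hence every biclique reachable from that point has smallest vertex below $key$ and would not be emitted anyway. The pruning is therefore safe by construction, and completeness follows from Lemma~\ref{lemma:part}. You instead fix a target $M=\langle L,R\rangle$ and follow the specific search path to it, using the maximality identity $\Gamma(R)=L$ to bound $Y\subseteq L$ and thereby certify that the new guards never fire on that path. Both routes are sound; the paper's invariant characterizes \emph{all} pruned branches at once, while yours is more direct for the completeness statement actually being proved and makes the role of maximality explicit.

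Two small points of care in your write-up. First, the preprocessing in Algorithm~\ref{NCR_DFS} shrinks only the tail set $T$, not the ambient subgraph $G'$; the operator $\Gamma$ is still evaluated in the full $2$-neighborhood, so your phrase ``the induced subgraph actually searched'' is slightly off. Second, removing vertices below $key$ from $T$ could in principle interfere with the duplicate-avoidance test $(Y\setminus(X\cup\{v\}))\subseteq T$ inherited from the base DFS; you should note that since $Y\subseteq L$ on your canonical path and every vertex of $L$ is $\ge key$, none of the removed vertices participates in that test, so it behaves exactly as in the unpruned algorithm. With those clarifications your argument is complete, and the appeal to Lemma~\ref{lemma:DFS_Opt} for uniqueness matches the paper.
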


\begin{proof}
  The correctness of this Lemma can be proved from 
  Lemmas~\ref{lemma:part}~and~\ref{lemma:DFS_Opt}.
  Algorithm ~\ref{algo:CD0} generates the 2-neighborhood induced sub--graph
  of each vertex in $G$. It then runs the Sequential DFS algorithm with the
  optimizations explained above.

  The correctness relies on the following two observations: Firstly, 
  from Lemma~\ref{lemma:DFS_Opt}, a maximal biclique is emitted
  from a reducer only if the smallest vertex in the biclique is same 
  as the reducer key. Secondly, no vertex is ever removed from the set $X$. 
  The set $X$ thus always
  grows in size and never gets smaller in the course of the
  depth-first search. This is because the set $Y$ is generated from
  set $X$ in line 11 of Algorithm~\ref{CPA} and the set $Y$ is passed
  as the new set $X$ for the next level of recursion.  The set $Y$ is
  generated from the set $X$ by taking the neighborhood of
  neighborhood of set $X$.  $\eta(X)$ contains the set of all vertices
  connected to all vertices in $X$.  Then $\Gamma(\Gamma(X))$ contains all
  vertices connected to all vertices in $\eta(X)$.  This must include
  $X$. Hence $Y \supseteq X$.

  From the above two observations we can prove the Lemma.  Since we
  emit only those maximal bicliques for which the smallest vertices is
  the same as the reducer key $k$, we do not need to search the paths
  that produce maximal bicliques with smallest vertex less than $k$.
  Also, since no vertex is ever removed from the set $X$ through the
  recursion path, we can be sure that at no point in the execution of
  the algorithm we will have $v \in X$ such that $v < k$.  Now the set
  $T$ can be considered as the candidate set as we always add elements
  to set $X$ from set $T$.  Thus in Algorithm~\ref{NCR_DFS} we remove
  all vertices from the set $T$ that are less than $k$.  Further in
  Algorithm~\ref{CPA}, if we generate a maximal biclique in Line 12
  with minimum vertex less than $k$ then we prune the search tree
  through that path as all further maximal bicliques found in that
  search path will contain that vertex less than $k$. 
\end{proof}


\subsection{Load Balancing}

\begin{algorithm} 
\caption{Algorithms CD1 and CD2} 
\KwIn{Edge List of $G=(V,E)$}
\label{algo:CDL}
Generate Adjacency List (Map) -- Algorithm~\ref{GALM} \;
Generate Adjacency List (Reduce) -- Algorithm~\ref{GALR} \;
Send vertex property (Map) -- Algorithm~\ref{NCM} \;
Send vertex property (Reduce) -- Algorithm~\ref{NCR_DFS_New} \;
Create Two Neighborhood (Map) -- Algorithm~\ref{NCM_DFS_Three} \;
Create Two Neighborhood (Reduce) -- Algorithm~\ref{NCR_DFS_Three} \;
\end{algorithm} 

\begin{algorithm}
\caption{Send vertex property -- Reduce }
\KwIn{$key \leftarrow v$, $value \leftarrow$ \{2--neighborhoods of $v$\} }
\label{NCR_DFS_New}
$S \leftarrow$ 2--neighbors of $v$ \;
$neighborhood \leftarrow$ Compute neighborhood of $v$ from $S$ \;
$N \leftarrow \langle v,neighborhood \rangle $  \;
\tcp{Need to pass neighborhood for Round 3} 
Emit ($key \leftarrow \varnothing$,$value \leftarrow N$) \;
\tcp{Need to send vertex property to all 2--neighbors} 
$p \leftarrow$ Value of vertex property of $v$ from $S$ \;
\ForAll{vertices $s \in S$}
{
	Emit($key \leftarrow \varnothing$,$value \leftarrow \lbrack s,v,p \rbrack$) \;
}
\end{algorithm}

\begin{algorithm} 
\caption{Create Two Neighborhood with vertex property -- Map} 
\KwIn{$N \leftarrow \langle v,neighborhood \rangle$ OR $\lbrack s,v,p \rbrack$}
\label{NCM_DFS_Three}
\If {$Input = N$}
{
        Emit ($key \leftarrow v$,$value \leftarrow N$) \;
        \ForAll{$y  \in neighborhood$}
        {
	        Emit ($key \leftarrow y$,$value \leftarrow N$) \;
        }
}
\ElseIf {$Value = \lbrack s,v,p \rbrack$}
{
        Emit ($key \leftarrow s$,$value \leftarrow \lbrack v,p \rbrack$) \;
}
\end{algorithm} 

\begin{algorithm}
\caption{Create Two Neighborhood with vertex property (DFS) -- Reduce }
\KwIn{$key \leftarrow v$, $value \leftarrow$ \{$\eta^2(v)$ along with vertex properties\}}
\label{NCR_DFS_Three}
$G'=(V',E') \leftarrow $ Induced subgraph on $\eta^2(v)$ \;
$Map \leftarrow $ HashMap of vertex and vertex property created from $value$ required to compute the new ordering \;
$X \leftarrow \varnothing$ \;
$T \leftarrow V'$ \;
\ForAll{vertex $t \in T$}
{
	\If{$t < key$ in the new ordering}
	{
			$T \leftarrow T \setminus \{ t \}$ \;
	}
}
CDL\_Seq($G'$, $X$, $T$, $key$,$Map$, $s$) \;
\end{algorithm}

\begin{algorithm} 
\caption{Load Balanced DFS -- CDL\_Seq} 
\KwIn{$G'$,$X$,$T$,$key$,$Map$,$s$}
\label{CPA_New}

	\ForAll{vertex $v \in T$}
	{
		\If{$\left | \Gamma(X \cup \{ v \}) \right | < s$}
		{
			$T \leftarrow T \setminus \{ v \}$ \;
		}
	}
	
	\If{$\left | X \right | + \left | T \right | < s$}
	{
		\Return
	}
	
	Sort vertices in $T$ as per ascending order of $\left | \Gamma(X \cup \{ v \}) \right |$ \;
	
	\ForAll{vertex $v \in T$}
	{
		$T \leftarrow T \setminus \{ v \}$ \;
		
		\If{$\left | X \cup \{ v \} \right | + \left | T \right | \ge s$}
		{

			$N  \leftarrow \Gamma( X \cup \{ v \} )$ \;
			$Y \leftarrow \Gamma( N )$ \; 

			\If{ $Y$ contains vertices smaller than $key$ in the new ordering}
			{
				continue \;
			}

			Biclique $B \leftarrow \left < Y, N \right >$ \;

			\If{$ ( Y \setminus ( X \cup \{ v \} ) ) \subseteq T $}
			{
				\If{$\left | Y \right | \ge s$}
				{
					$v_s \leftarrow$ Smallest vertex in $B$ in the new ordering \;
					\If{$v_s = key$}
					{
                                               \tcp{Maximal biclique found}
						Emit ($key \leftarrow \varnothing$,$value \leftarrow B$) \;  
					}
				}	

				CDL\_Seq($G'$, $Y$, $T \setminus Y$, $key$,$Map$, $s$) \;
			}

		}
	}

\end{algorithm} 

In Algorithm~\ref{algo:CD0}, lexicographical ordering was used to
order the vertices, which is agnostic of the properties of the cluster
$C(v)$.  The way the optimized DFS works (Algorithm~\ref{CPA}), a
reducer processing a vertex that is earlier in the total order is
responsible for emitting more of the maximal bicliques within its
cluster.

For improving load balance, we adjust the position of vertex $v$ in
the total order according to the properties of its cluster $C(v)$.
Intuitively, the more complex cluster $C(v)$ is (i.e. more and larger
the maximal bicliques), the higher should be position of $v$ in the
total order, so that the burden on the reducer handling $C(v)$ is
reduced. While it is hard to compute (or estimate) the number of
maximal bicliques in $C(v)$, we consider two properties of vertex $v$
that are simpler to estimate, to determine the relative ordering of
$v$ in the total order: (1)~Size of 1-neighborhood of $v$ (Degree),
and (2)~Size of 2-neighborhood of $v$

In case of a tie, the vertex ID is used as a tiebreaker. These
approaches were considered since vertices with higher degree are
potentially part of a denser part of the graph and are contained
within a greater number of maximal bicliques. The size of the
2-neighborhood gives the number of vertices in $C(v)$ and may provide
a better estimate of the complexity of handling $C(v)$, but this is
more expensive to compute than the size of the 1-neighborhood of the
vertex.

\remove{
Similarly vertices with
a larger 2--neighborhood will be part of more maximal bicliques.  The
strategy is as follows: the more number of maximal bicliques a vertex
is a member of, the less amount of work should it do.  {\color{blue}
  If possible, need to check if there are some theoretical results
  supporting this claim}.
}

The discussion below is generic and holds for both approaches to load
balancing. To run the load balanced version of DFS, the reducer
running the sequential algorithm must now have the following
information for the vertex (key of the reducer) : (1)~2-neighborhood
induced subgraph, and (2)~vertex property for every vertex in the
2-neighborhood induced subgraph, where ``vertex property'' is the
property used to determine the total order, be it the degree of the
vertex or the size of the 2-neighborhood.  The second piece of
information is required to compute the new vertex ordering. However,
the reducer of the second round does not have this information for
every vertex in $C(v)$, and a third round of MapReduce is needed to
disseminate this information among all reducers.  Further details are
described in Algorithm~\ref{algo:CDL}.  The DFS sequential algorithm
for load balancing, described in Algorithm~\ref{NCR_DFS_Three}, is the
same as the optimized DFS sequential algorithm~\ref{NCR_DFS}, except
that it orders using the vertex property (ties broken by IDs) rather
than the simple lexicographic ordering.

\subsection{Communication Complexity}

We consider the communication complexity of Algorithms CD0, CD1 and CD2.
For input graph $G = \left ( V, E \right )$,
we know that $n = \left | V \right |$ and $m = \left | E \right |$.
Let us assume $\Delta$ to be the largest degree and $\bar{d}$ 
to be the average degree of vertices, where $\bar{d} = m / n$.
Also, let $\beta$ be the Output Size. 

\begin{definition}
\label{def:comm_complx}
Communication complexity of a MapReduce Algorithm $\mathcal{A}$
for Round $r$ is denoted by $\mathcal{C}^r_\mathcal{A}$ and is defined as 
the sum of the total number of bytes emitted by all Mappers and the total 
number of bytes emitted by all the Reducers. 
We consider the output size for Reducers contributing to $\mathcal{C}^r_\mathcal{A}$
as  each Reducer writes into the distributed file system incurring communication.
\end{definition}

\begin{definition}
\label{def:total_comm_complx}
Let $\mathcal{C}_\mathcal{A}$ denote the total communication
complexity for a MapReduce Algorithm $\mathcal{A}$ having $R$ rounds.
We define
$\mathcal{C}_\mathcal{A}$ $=$ $\sum\limits_{r=1}^R \mathcal{C}^r_\mathcal{A}$.
\end{definition}

\begin{lemma}
\label{lemma:complx_CD0}
Total communication complexity of Algorithm CD0 is 
$O \left ( m \cdot \Delta +\beta \right )$.
\end{lemma}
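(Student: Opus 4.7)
The plan is to split $\mathcal{C}_{\text{CD0}}$ by round and by map/reduce phase, since the four phases have quite different sizes. Writing $\deg(v) = |\eta(v)|$, I will use the identity $\sum_v \deg(v) = 2m$ and the trivial bound $\deg(v) \le \Delta$ throughout, together with the assumption that isolated vertices may be discarded so $n = O(m)$.

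For Round 1, both the mapper GALM (two constant-size pairs per edge) and the reducer GALR (one adjacency list $\langle v, \eta(v)\rangle$ per vertex, of size $O(\deg(v)+1)$ and summing to $O(m)$) emit $O(m)$ bytes in total. This round therefore contributes $O(m)$ and is dominated by later terms, so I would dispense with it in a single sentence.

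The interesting phase, and the one I expect to be the main obstacle, is the mapper of Round 2 (Algorithm NCM). Here, for each input record $\langle v, \eta(v)\rangle$ the mapper fans out $\deg(v)+1$ copies of that record (one keyed by $v$, one by each neighbor $y \in \eta(v)$), and every copy carries a value of size $\Theta(\deg(v)+1)$. The total mapper output therefore scales like
\[
\sum_{v \in V} (\deg(v)+1)^2 \;=\; \sum_{v \in V} \deg(v)^2 \;+\; O(m+n).
\]
The plan is to control the sum-of-squares via the standard bound $\sum_v \deg(v)^2 \le \Delta \sum_v \deg(v) = 2m\Delta$, yielding $O(m\Delta)$ bytes for the mapper. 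This is the only step that requires more than bookkeeping.

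Finally, the Round 2 reducer (Algorithm NCR\_DFS) writes only maximal bicliques of $G$ that pass the pruning test in Algorithm CPA. By Lemma~\ref{lemma:DFS_Opt} each such biclique is emitted by exactly one reducer, so the aggregate reducer output is at most the total output size $\beta$. Summing the four contributions gives $O(m) + O(m) + O(m\Delta) + O(\beta) = O(m\Delta + \beta)$, as claimed. Everything apart from the fan-out bound for NCM is straightforward accounting.
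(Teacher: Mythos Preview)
Your proof is correct and follows essentially the same route as the paper: both arguments handle Round~1 as $O(m)$, bound the Round~2 mapper traffic via $\sum_v \deg(v)^2 \le \Delta\sum_v \deg(v) = 2m\Delta$, and charge the final reducer output to $\beta$. Your version is marginally more careful (you analyze the fan-out from the sender side rather than the receiver side, explicitly track the $+1$ self-keyed copy, and invoke Lemma~\ref{lemma:DFS_Opt} to justify that the reducer output is bounded by $\beta$), but these are refinements of the same argument rather than a different approach.
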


\begin{proof}
Algorithm CD0 has two rounds of MapReduce.
For the first round, Algorithm~\ref{GALM}, which is the Map method emits
each edge twice, resulting in a communication complexity of $O \left ( m \right )$.
Similarly, Algorithm~\ref{GALR}, which is the reducer emits each adjacency
list once. This also results in a communication complexity of $O \left ( m \right )$.
Hence total communication complexity of the first round is $O \left ( m \right )$.

Now let us consider the second round of MapReduce. 
The total communication between the Map and Reduce methods
(Algorithms~\ref{NCM}~and~\ref{NCR_DFS} respectively) can be
computed by analyzing how much data is received by all Reducers.
Each reducer receives the adjacency list of all the neighbors of the key.
Let $d_i$ be the degree of vertex $v_i$, for $v_i \in V$, $i = 1,..,n$.
Total communication is thus 
$\sum\limits_{i=1}^n {d_i}^2 $
$= {d_1}^2 + {d_2}^2 + \cdot \cdot \cdot {d_n}^2$.
This is 
$O \left (  \left ( d_1 + d_2 + \cdot \cdot \cdot + d_n \right ) \cdot \Delta \right ) $
$= O \left (  n \cdot \bar{d} \cdot \Delta \right )$.
Since $\bar{d} = m / n$, total communication becomes 
$O \left ( m \cdot \Delta \right )$.
The output from the final Reducer (Algorithm~\ref{NCR_DFS}) is the collection of all
maximal bicliques and hence the resulting communication cost is $O \left ( \beta \right )$.
Combining two rounds, total communication complexity becomes
$ O \left ( m + m \cdot \Delta +\beta \right )$.
$= O \left ( m \cdot \Delta +\beta \right )$. 
\end{proof}

\begin{lemma}
\label{lemma:complx_CDL}
Total communication complexity of Algorithm CD1 / CD2 is 
$O \left ( m \cdot \Delta +\beta \right )$.
\end{lemma}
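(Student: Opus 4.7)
The plan is to decompose the analysis by MapReduce round, mirroring the structure of Lemma~\ref{lemma:complx_CD0} but accounting for the extra round that CD1/CD2 introduce to disseminate vertex properties. First I would observe that Round~1 (Algorithms~\ref{GALM} and~\ref{GALR}) is identical to CD0's first round, so it contributes $O(m)$ communication.

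For Round~2, the Map step is Algorithm~\ref{NCM}, which replicates each adjacency list to the vertex and to each of its neighbors. By the same calculation used in Lemma~\ref{lemma:complx_CD0} — namely $\sum_i d_i^2 \le \Delta \sum_i d_i = 2m\Delta$ — the Map output is $O(m \cdot \Delta)$. The Reduce (Algorithm~\ref{NCR_DFS_New}) emits, for each vertex $v$, one adjacency list (total $O(m)$) plus a constant-size triple $[s,v,p]$ for every $s \in \eta^2(v)$. The key step is bounding $\sum_{v \in V} |\eta^2(v)|$; since each $s \in \eta^2(v)$ is a neighbor of some neighbor of $v$, we have $|\eta^2(v)| \le \sum_{u \in \eta(v)} d(u) \le d(v)\cdot \Delta$, so
\[
\sum_{v \in V} |\eta^2(v)| \;\le\; \Delta \sum_{v \in V} d(v) \;=\; 2m\Delta \;=\; O(m\cdot \Delta).
\]
Hence Round~2 contributes $O(m\cdot \Delta)$ in total.

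For Round~3, the Map (Algorithm~\ref{NCM_DFS_Three}) handles two input types. For inputs of the form $N = \langle v, neighborhood \rangle$ the emission pattern is exactly that of Algorithm~\ref{NCM}, contributing $O(m\cdot\Delta)$ as before. For inputs of the form $[s,v,p]$, each input produces a single emission, and the number of such inputs equals $\sum_{v}|\eta^2(v)| = O(m\cdot \Delta)$ by the bound established above. The Reduce (Algorithm~\ref{NCR_DFS_Three}) runs CDL\_Seq and, by Lemma~\ref{lemma:DFS_Opt} applied to the new ordering, emits each maximal biclique exactly once, contributing $O(\beta)$.

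Summing the three rounds gives $O(m) + O(m\cdot\Delta) + O(m\cdot\Delta) + O(\beta) = O(m\cdot\Delta + \beta)$, completing the argument. The main obstacle is the 2-neighborhood bound $\sum_v |\eta^2(v)| = O(m\Delta)$; once this inequality is in hand the rest of the proof is bookkeeping, since every other term is either trivially $O(m)$ or a restatement of a bound already established in Lemma~\ref{lemma:complx_CD0}.
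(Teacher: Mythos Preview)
Your proof is correct and follows essentially the same round-by-round decomposition as the paper. If anything, you are more explicit than the paper in deriving the key inequality $\sum_{v}|\eta^2(v)| = O(m\Delta)$, which the paper asserts without justification; your appeal to Lemma~\ref{lemma:DFS_Opt} for the $O(\beta)$ reducer output is also a slight sharpening, as the paper simply states that the final reducers emit all maximal bicliques.
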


\begin{proof}
First, note that both Algorithms CD1 and CD2 have the same communication complexity
and observe that the first round uses the same Map and Reduce methods as CD0. 
Thus communication for Round 1 is $O \left ( m \right )$.
Again, note that Map method for Round 2 is same as CD0 and hence by 
Lemma~\ref{lemma:complx_CD0}, communication for Round 2 is $O \left ( m \cdot \Delta \right )$.

The Reducer (Algorithm~\ref{NCR_DFS_New}) of Round 2 sends the vertex property information to
all its 2--neighbors. Thus every reducer receives information about all of its 2--neighbors.
This makes the total output size of Reducer to be $O \left ( m \cdot \Delta \right )$.
The Map method of Round 3 (Algorithm~\ref{NCM_DFS_Three}) sends out 
the 2--neighborhood information as well as the vertex information to all vertices in 2--neighborhood.
Thus communication cost becomes $O \left ( m \cdot \Delta \right )$.
The Reducer (Algorithm~\ref{NCR_DFS_Three}) emits all maximal bicliques and hence
the resulting communication cost is $O \left ( \beta \right )$.
Thus total communication cost for Algorithms CD1 and CD2 is is 
$O \left ( m \cdot \Delta +\beta \right )$. 
\end{proof}

\subsection{Parallel Consensus}
\label{sec:parallel}

We briefly describe another approach which directly parallelizes the
consensus sequential algorithm of~\cite{AACFHS2004}, in a manner
different from the clustering approach. The motivation for this
approach is as follows.  The clustering approach has the following
potential drawback, it requires each cluster $C(v)$ to have the entire
2-neighborhood of $v$.  For dense graphs, the size of the
2-neighborhood of a vertex can be large, so that the complexity of
each reduce task can be substantial. With the nature of the MapReduce
model, dynamic load balancing among the reducers is not (easily)
possible, so that load balancing will always be an issue for
non-uniform, irregular computations.

Unlike the parallel DFS algorithm which works on subgraphs of $G$, the
consensus algorithm is always directly dealing with bicliques within
graph $G$.  At a high level, it performs two operations repeatedly
(1)~a ``consensus'' operation, which creates new bicliques by
considering the combination of existing bicliques, and (2)~an
``extension'' operation, which extends existing bicliques to form new
maximal bicliques. There is also a need for eliminating duplicates
after each iteration, and also a step needed for detecting
convergence, which happens when the set of maximal bicliques is stable
and does not change further.

We developed a parallel version of each of these operations, by
performing the consensus, extension, duplicate removal, and
convergence test using MapReduce. We omit the details due to lack of
space, but present experimental results from our implementations.

\remove{
In this section, we try to directly parallelize the techniques of the
sequential Algorithm~\ref{SCA} so that we can find out how it compares
with the clustering approach described in the previous section. The
motivation for doing that is as follows.  The clustering approach
requires each cluster to have the entire subgraph in the
2-neighborhood of a vertex and we need one cluster for each vertex in
the input graph. For dense graphs, the 2-neighborhood subgraph will
approach the size of the entire graph and hence this approach would
fail for such large graphs. 
}

\remove{
In this approach we try to minimize the
memory requirement by loading in memory only those exact neighborhood
information, which is needed for the various operations of the
consensus and extension. Algorithm \ref{DA} is the ``Driver Algorithm"
for this approach. It demonstrates the high level steps performed. The
approach follows from Algorithm~\ref{SCA}.  First the seed set of
bicliques are computed and then consensus and extension operations are
performed iteratively until no new maximal bicliques are found. After
each iteration, the algorithm removes all duplicate bicliques. We
perform all these actions including consensus, extension and
duplication removal using MapReduce.  Algorithm for computing the seed
set of bicliques (computing all the stars or 1-neighborhood) is
similar to the previous algorithm and hence is not given. The
duplicate elimination is done using hash functions. Each biclique is
emitted with the hash function as the key and the biclique as the
value. All duplicates for each biclique comes to the same reducer as
they have the same hash value and thus can be eliminated by emitting
only one copy of the biclique from the reducer.
}

\remove{
The flow chart~\ref{schematic:parallel} shows the flow of the algorithm.

\begin{figure*}	`
\begin{center}
\begin{tikzpicture}[node distance = 2cm, auto]
\node [block] (DA) {Driver Algorithm};
\node [block, right of=DA] (Stars) {Generate seed bicliques using MapReduce. Set $PREV=SEED$};
\node [block, right of=Stars] (Cons) {Consensus MapReduce between $SEED$ and $PREV$};
\node [block, right of=Cons] (Extension) {Extension (2 MapReduce Rounds)};
\node [block, right of=Extension] (Dup) {Remove duplicates using MapReduce};
\node [block, right of=Dup] (Out_iteration) {Add Result to Output and assign it to PREV};
\node [decision, below of=Out_iteration] (decide) {is PREV empty};
\node [block, below of=decide, node distance=3cm] (stop) {stop};
\path [line] (DA) -- (Stars);
\path [line] (Stars) -- (Cons);
\path [line] (Cons) -- (Extension);
\path [line] (Extension) -- (Dup);
\path [line] (Dup) -- (Out_iteration);
\path [line] (Out_iteration) -- (decide);
\path [line] (decide) -- node {yes} (stop);
\path [line] (decide) -| node [near start] {no} (Cons);
\end{tikzpicture}
\caption{Schematic for Clustering Approach} \label{schematic:parallel}
\end{center}
\end{figure*}

\begin{algorithm} 
\caption{Driver Algorithm} 
\label{DA}
Load Graph G = (V,E)
$S \leftarrow $ Star bicliques from $G$ \;
$SEED \leftarrow $ Extend $S$ to obtain $SEED$ \;
Eliminate duplicates from $SEED$ \;
Add $SEED$ to output \;
$PREV \leftarrow SEED$ \;
\Repeat{$NEW$ is $\varnothing$}
{
	$NEW \leftarrow $ Consensus between all maximal bicliques in $SEED$ and $PREV$ \;
	$NEW_{EXTEND} \leftarrow $ Extend all bicliques in $NEW$ \;
	Eliminate duplicates from $NEW_{EXTEND}$ \;
	Add $NEW_{EXTEND}$ to output \;
	$PREV \leftarrow NEW_{EXTEND}$ \;
}
\end{algorithm}

\begin{algorithm} 
\caption{Consensus Map}
\label{CM}
\ForAll{$i$ such that $i$ is an node in the left set of the biclique $H$}
{
	Emit $(i,H)$ \;
}
\ForAll{$j$ such that $j$ is an node in the right set of the biclique $H$}
{
	Emit $(j,H)$ \;
}
\end{algorithm} 

\begin{algorithm} 
\caption{Consensus Reduce}
\label{CR}
\ForAll{$x$ such that $x$ is a seed biclique containing the key $k$ }
{
	\ForAll{$y$ such that $y$ is a biclique from previous round having the key $k$ }
	{
		\If{key = minimum common element of the bicliques x and y}
		{
			$C \leftarrow $ Potentially new maximal bicliques from consensus of $x$ and $y$ \;
			\ForAll{$c$ in $C$}
			{
				Extend the biclique $c$ to generate maximal biclique $H$ \;
				Emit $(\varnothing,H)$ \;
			}
		}
	}	
}
\end{algorithm} 

\begin{algorithm} 
\caption{Extension Initial Map} 
\label{EIM}
$B \leftarrow  $ Input biclique \;
\If{$B$ is from star set}
{
	$x \leftarrow$  Main vertex \;
	Emit ($x$,$B$) \;
} 
\If{data is from consensus output}
{
	\ForAll{vertices $i$ such that $i$ is in $B$ }
	{
		Emit ($i$,$B$) \;
	}
}
\end{algorithm} 

\begin{algorithm}
\caption{Extension Initial Reduce} 
\label{EIR}
$BCONT \leftarrow  \varnothing$ \;
\ForAll{$value$ for $key$ }
{
	\If{$value$ is a neighborhood information}
	{
		$N \leftarrow $ Neighborhood of vertex $key$ \;
	}
	\Else
	{
		$BCONT \leftarrow  BCONT \cup value$ \;
	}
}
\ForAll{bicliques $b$ in $BCONT$ }
{
	$h \leftarrow  $ Hash value of biclique $b$ \;
	Emit ($h$,$b$) \;
	Emit ($h$,$N$) \;
}
\end{algorithm} 

Algorithms~\ref{CM} and~\ref{CR} (map and reduce) describe the consensus operation using MapReduce. To perform consensus between the collections $SEED$ and $PREV$ naively, it would require $\left | SEED \right | \star \left | PREV \right |$ consensus operations. However, we reduce the total number of consensus operations using the following observation: 
If there are no common vertices between two bicliques, in that case the consensus output between the concerned two bicliques is the NULL set. This is because the intersection operation in the consensus will result in NULL. This helps us to ``group'' the bicliques in $ n = \left|{V}\right| $ sets, one for each vertex of the graph. A biclique is a part of the group for vertex $v$, if $v$ is contained in the biclique. The map method helps to achieve this by ``grouping'' all bicliques having a particular vertex in common, thus eliminating the need of doing unnecessary consensus operations.
Algorithm~\ref{CM} (the map method) receives the seed set of maximal bicliques and the maximal bicliques from the previous round as its input. It emits the input biclique as the \emph{value} for each of the members in both the left and right sets of the biclique as \emph{keys}. Thus for a biclique \begin{math} H = \end{math} \begin{math} \langle A,B \rangle \end{math}, for all nodes \begin{math} i \in A \end{math} and \begin{math} j \in B \end{math} it emits \begin{math} (i,H) \end{math} and \begin{math} (j,H) \end{math} respectively. 
In Algorithm~\ref{CR} we compute the consensus among all the bicliques that have a common key. Thus the \emph{reduce} method is where we do the actual consensus operation. Each reduce method receives all the bicliques that have a common key. 
However, for bicliques having more than one common vertex, the consensus operation will be duplicated. We avoid this by doing the consensus only when the key of the reducer is equal to the smallest common node between two bicliques. This ensures that we do the consensus operation between any two bicliques just once. 

Finally we explain the extension operation. To reduce memory requirement, we required four rounds of MapReduce to perform the extension. The intention of the process is to bring together only those neighborhood information, which is required to extend a biclique. Algorithms \ref{EIM} and \ref{EIR} describe the map and reduce algorithms for the first round. Recall that the extension operation requires computation of 2-neighborhood of both the left and right set of the vertices in the biclique. The first two rounds of MapReduce are used to compute the 1-neighborhood of both the sets and then the same two rounds are run one more time to obtain the 2-neighborhood information.

The Extension Initial Map algorithm \ref{EIM} helps to group all bicliques that have a common vertex (say v) including the neighborhood information of that vertex (v). This means we effectively grouped all the bicliques that need the neighborhood information of the vertex \emph{v} to perform the extension. Thus the Extension Initial Reduce algorithm \ref{EIR} uses this information to emit the hashvalues of each of the bicliques as the key and the neighborhood information as the value. 

The final round processes the output of the last round of MapReduce as its input to the Map stage and passes on all neighborhood information to the required hashvalues. The reducer receives a hashvalue of a biclique as the \emph{key} and the biclique and the neighborhood of all vertices in that biclique as the \emph{value}. This can then be used to construct a subgraph of the original input graph. The subgraph contains exactly those adjacency lists as are required for the extension of the concerned biclique. Thus using the subgraph, the biclique can be easily extended by the reducer to compute the neighborhoods of the left and right sets. 

Note that the output of the first two rounds is not necessarily a biclique and is just intermediate information to complete the final extension computation. Thus the computation is repeated once more to extend the original biclique. Thus from each biclique, we can generate two maximal bicliques by the neighborhood operation. 

Finally, the algorithm stops when no new maximal bicliques are found after completing an iteration. The Driver Algorithm~\ref{DA} checks for the same and halts if no new maximal bicliques are found.
}

\section{Experimental Results}
\label{sec:results}

\begin{table*}
\caption{Various properties of the input graphs used, 
and runtime of different algorithms using 100 reducers.
DNF means that the algorithm did not finish in 12 hours.
The size threshold was set as $1$ to enumerate all maximal bicliques.
Runtime includes overhead of all MapReduce rounds including graph 
clustering, i.e. formation of 2--neighborhood.}
\tiny
\label{table:inputs}
\centering
\begin{tabular}{c c c c c c c c c c} 
Label & Input Graph & \#vertices & \#edges & \#max--bicliques & Output
Size & CDFS & CD0 & CD1 & CD2 \\ [0.5ex]
\hline\hline 
1 & p2p-Gnutella09 & 8114 & 26013 & 20332 & 407558 & 113 & 92 & 132 & 130 \\
2 & email-EuAll-0.6 & 125551 & 168087 & 292008 & 9161154 & 42023 & 4640 & 683 & 626 \\
3 & com-Amazon & 334863 & 925872 & 706854 & 12739908 & 186 & 113 & 185 & 221 \\
4 & amazon0302 & 262111 & 1234877 & 886776 & 14553776 & 396 & 272 & 151 & 153 \\
5 & com-DBLP-0.6 & 251226 & 419573 & 1875185 & 82814962 & 1659 & 409 & 374 & 478 \\
6 & email-EuAll-0.4 & 175944 & 252075 & 2003426 & 111370926 & DNF & DNF & 6365 & 4154 \\
7 & ego-Facebook-0.6 & 3928 & 35397 & 6597716 & 315555360 & 8657 & 3858 & 1512 & 2943 \\
8 & loc-BrightKite-0.6 & 49142 & 171421 & 10075745 & 777419528 & 28585 & 11451 & 2506 & 2998 \\
9 & web-NotreDame-0.8 & 150615 & 300398 & 19941634 & 942300172 & DNF & DNF & 1688 & 2327 \\
10 & ca-GrQc-0.4 & 5021 & 17409 & 16133368 & 3101214314 & 37279 & 6895 & 5790 & 6374 \\
11 & ER-50K & 50000 & 275659 & 51756 & 1116752 & 96 & 89 & 133 & 136 \\
12 & ER-60K & 60000 & 330015 & 61821 & 1334716 & 98 & 89 & 135 & 135 \\
13 & ER-70K & 70000 & 393410 & 71962 & 1589408 & 98 & 90 & 135 & 132 \\
14 & ER-80K & 80000 & 448289 & 81983 & 1809070 & 102 & 90 & 136 & 134 \\
15 & ER-90K & 90000 & 526943 & 92214 & 2125544 & 109 & 96 & 142 & 140 \\
16 & ER-100K & 100000 & 600038 & 102663 & 2421528 & 114 & 97 & 144 & 143 \\
17 & ER-250K & 250000 & 1562707 & 252996 & 6274864 & 167 & 114 & 165 & 162 \\
18 & ER-500K & 500000 & 3751823 & 506319 & 15057870 & 374 & 167 & 251 & 252 \\
19 & Bipartite-50K-100K & 150000 & 1999002 & 306874 & 9256056 & 873 & 183 & 227 & 253 \\
\hline
\end{tabular}
\end{table*}

\begin{figure*}[!ht]
\begin{center}

$\begin{array}{cc}

\subfloat[ ER-500K and p2p-Gnutella09]{\label{fig:runtime-a}\includegraphics[width=0.45\textwidth]{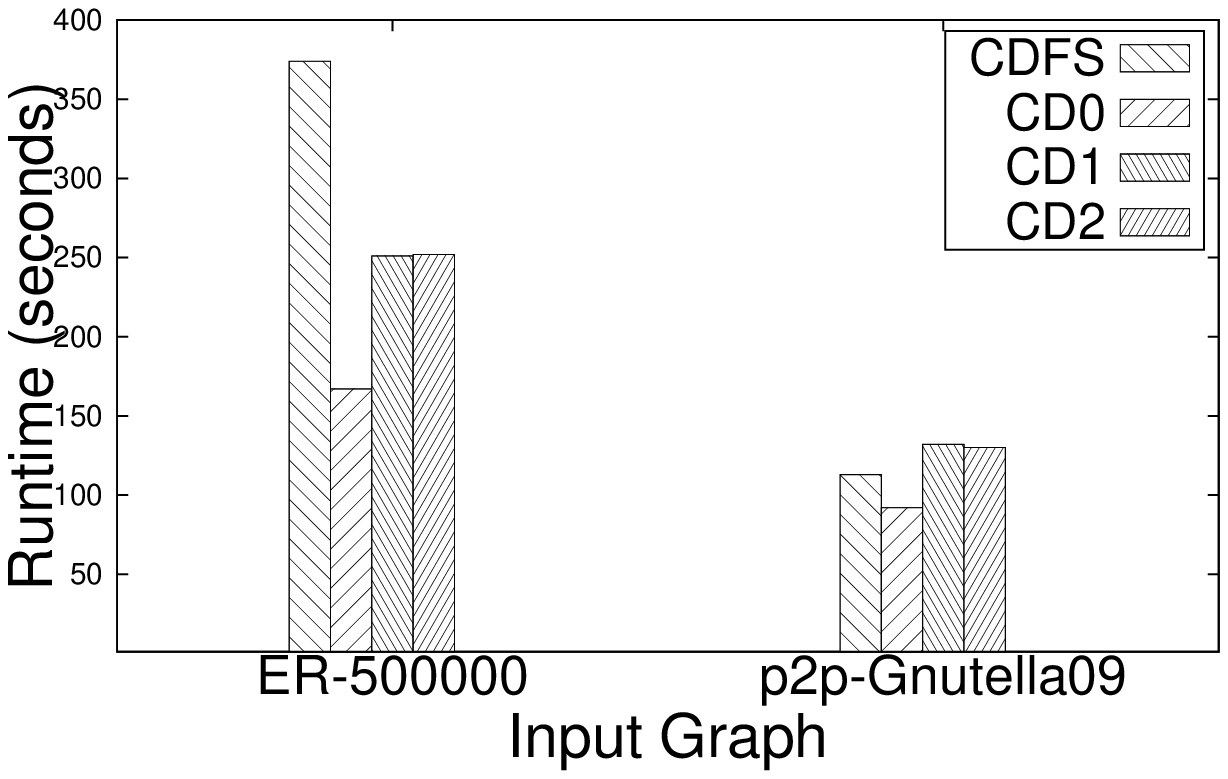}}

\subfloat[ web-NotreDame-0.8 and Bipartite-50K-100K]{\label{fig:runtime-b}\includegraphics[width=0.45\textwidth]{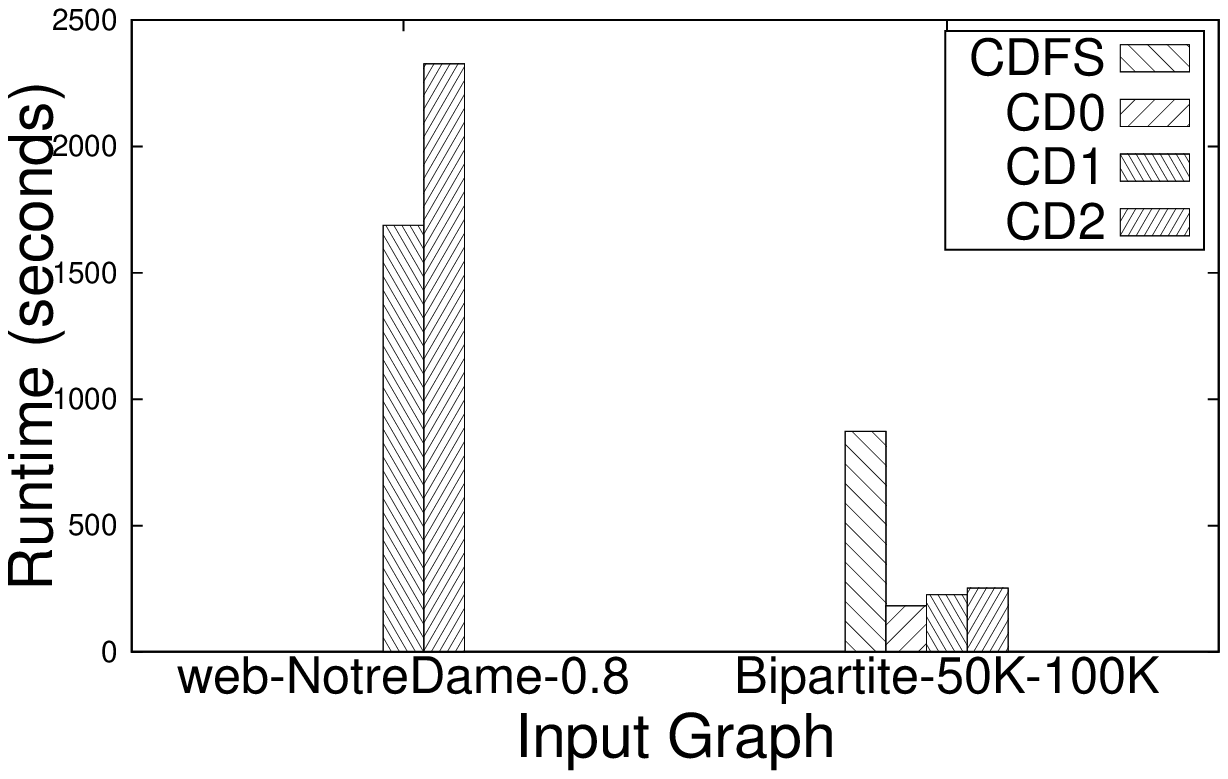}}

\end{array}$

$\begin{array}{cc}

\subfloat[ amazon0302 and com-Amazon]{\label{fig:runtime-c}\includegraphics[width=0.45\textwidth]{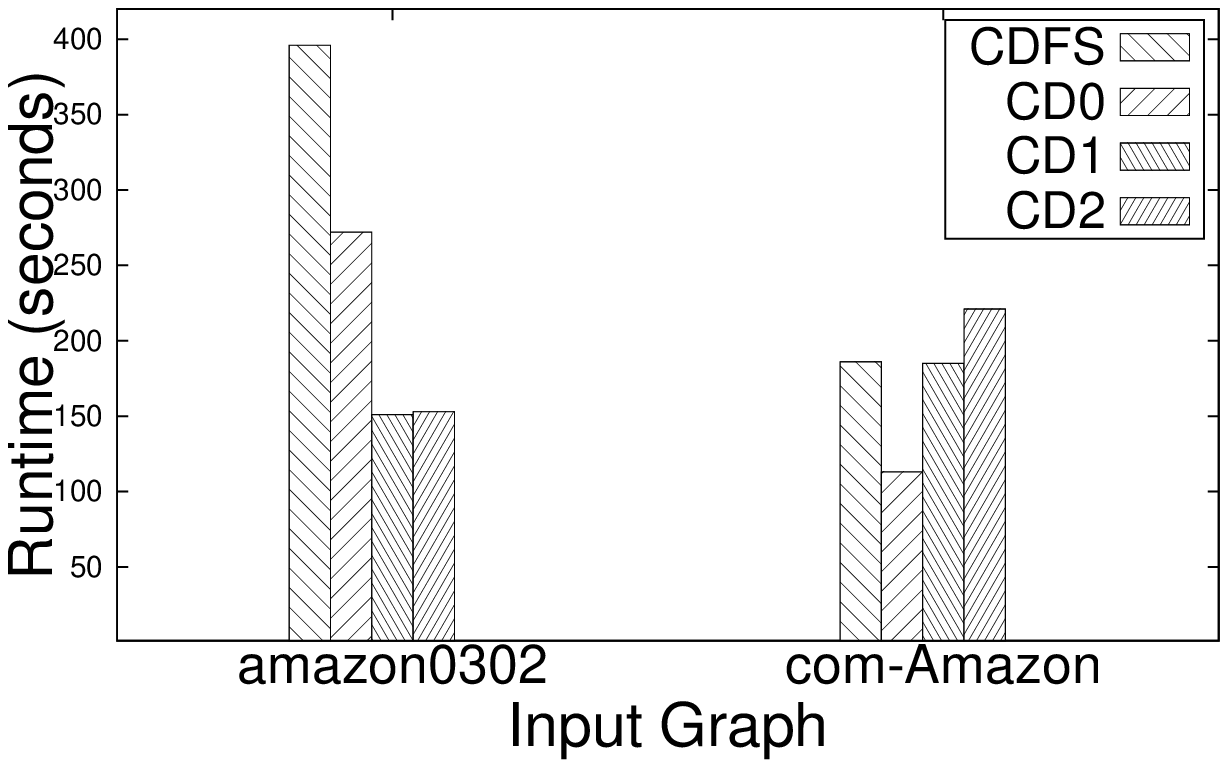}}

\subfloat[ email-EuAll-0.6 and ca-GrQc-0.4]{\label{fig:runtime-d}\includegraphics[width=0.45\textwidth]{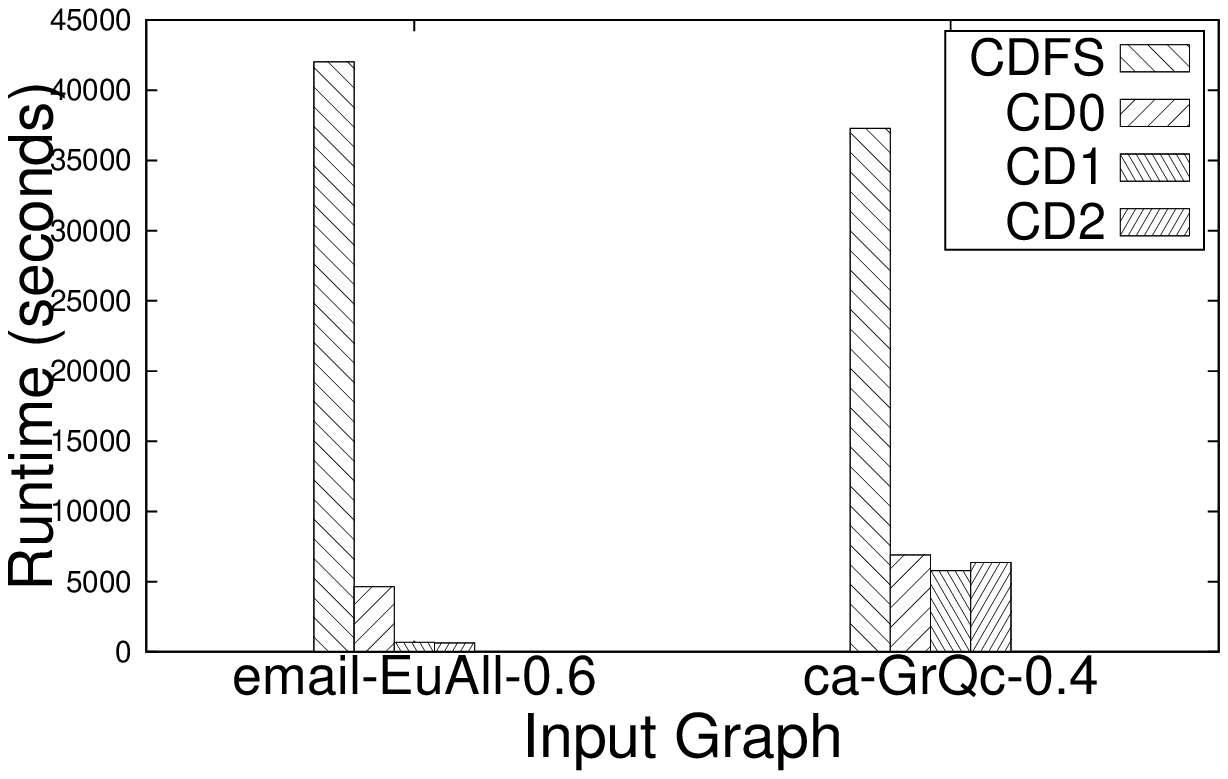}}

\end{array}$

$\begin{array}{cc}

\subfloat[ loc-BrightKite-0.6 and ego-Facebook-0.6]{\label{fig:runtime-e}\includegraphics[width=0.45\textwidth]{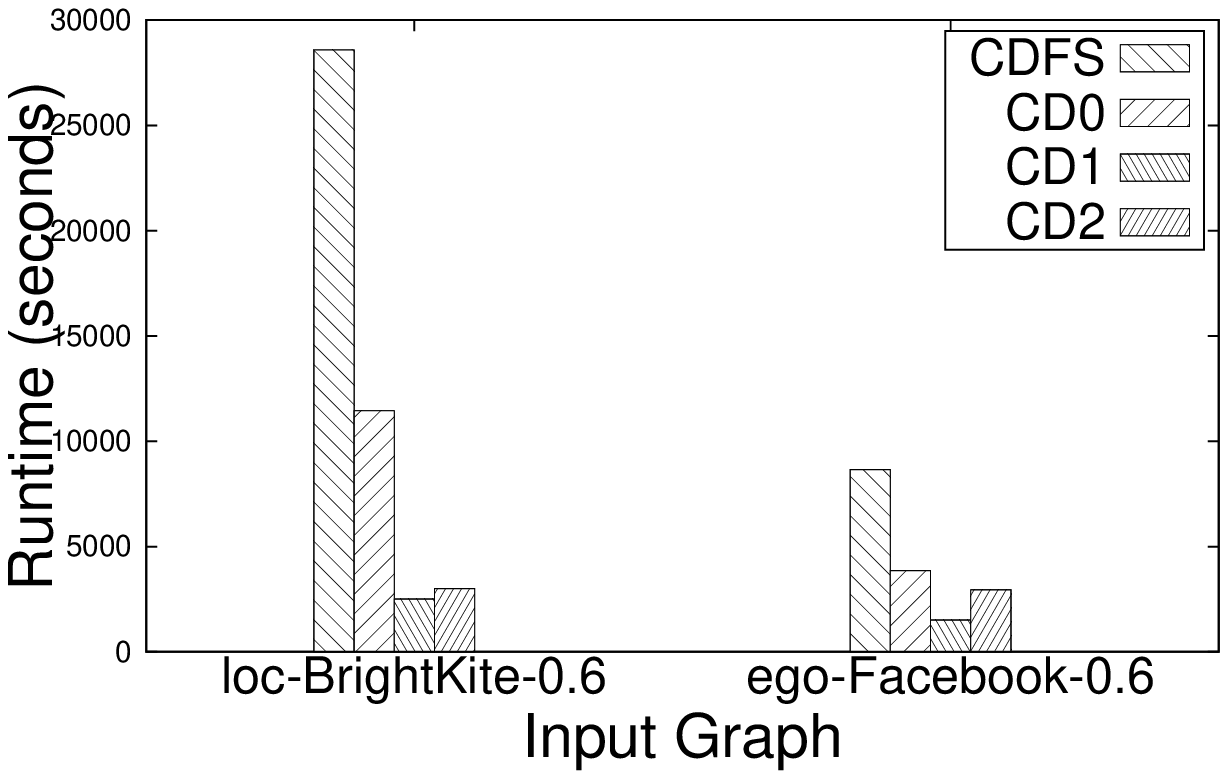}}

\subfloat[ email-EuAll-0.4 and com-DBLP-0.6]{\label{fig:runtime-f}\includegraphics[width=0.45\textwidth]{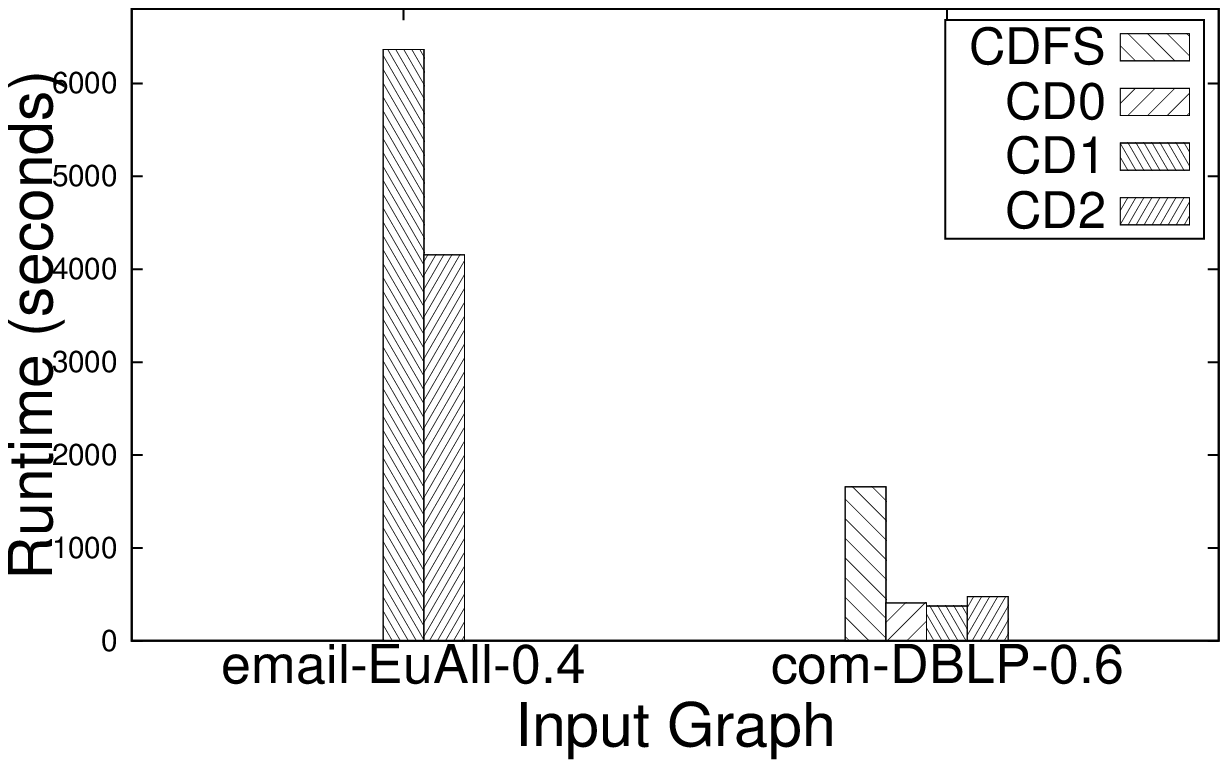}}

\end{array}$

\end{center}

\caption{Runtime of parallel algorithms 
on real and random graphs. If an algorithm failed to complete in 12
hours the result is not shown. All algorithms were run using 100
reducers. Runtime includes overhead of all MapReduce rounds including graph clustering,
i.e. formation of 2--neighborhood.}
\label{fig:runtime}
\end{figure*}

\begin{figure*}[!ht]
\begin{center}
\subfloat[Algorithm CD1]{\label{fig:scaling_LBD}\includegraphics[width=0.45\textwidth]{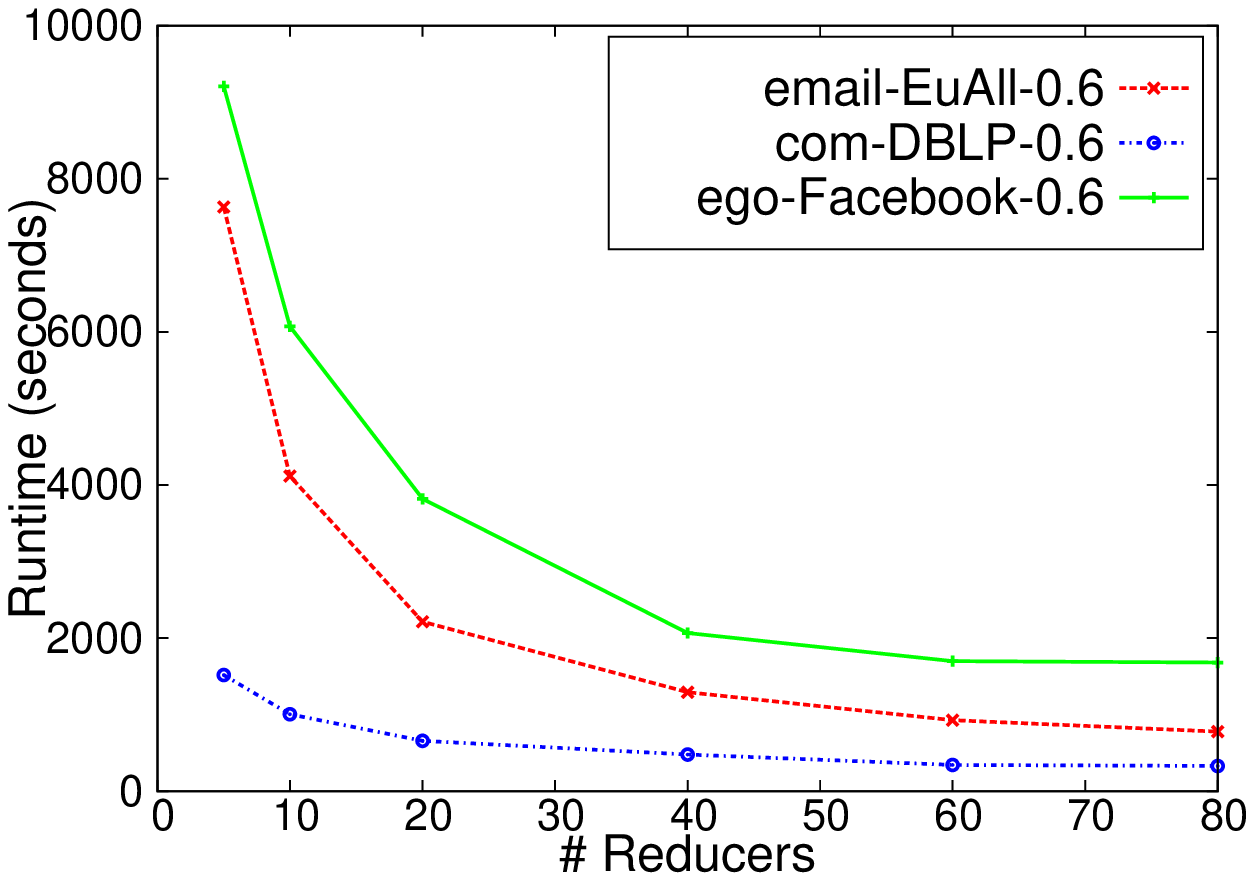}}
\subfloat[Algorithm CD2]{\label{fig:scaling_LBN}\includegraphics[width=0.45\textwidth]{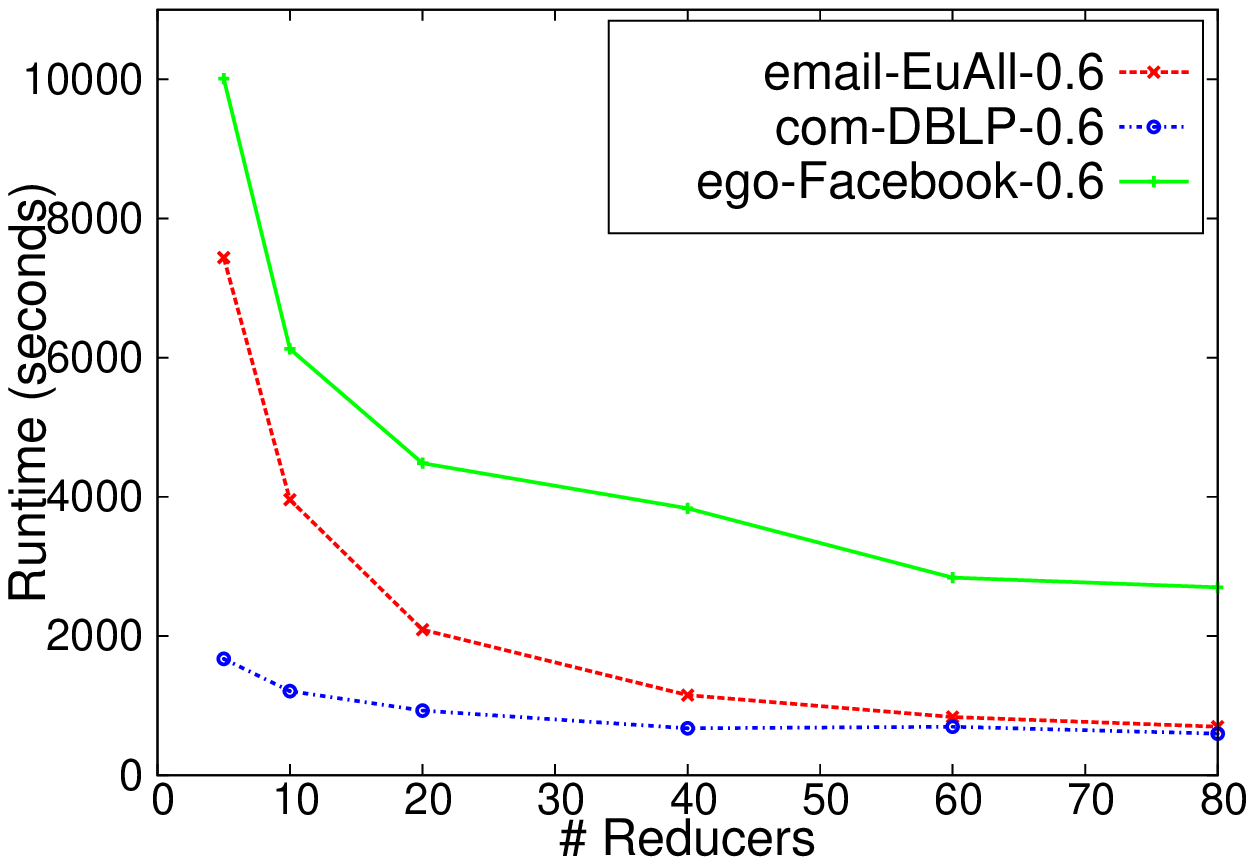}}
\caption{Runtime versus Number of Reducers.} \label{fig:scaling}	
\end{center}
\end{figure*}

\begin{figure*}[!ht]
\begin{center}
\subfloat[Algorithm CD1]{\label{fig:speedup_LBD}\includegraphics[width=0.45\textwidth]{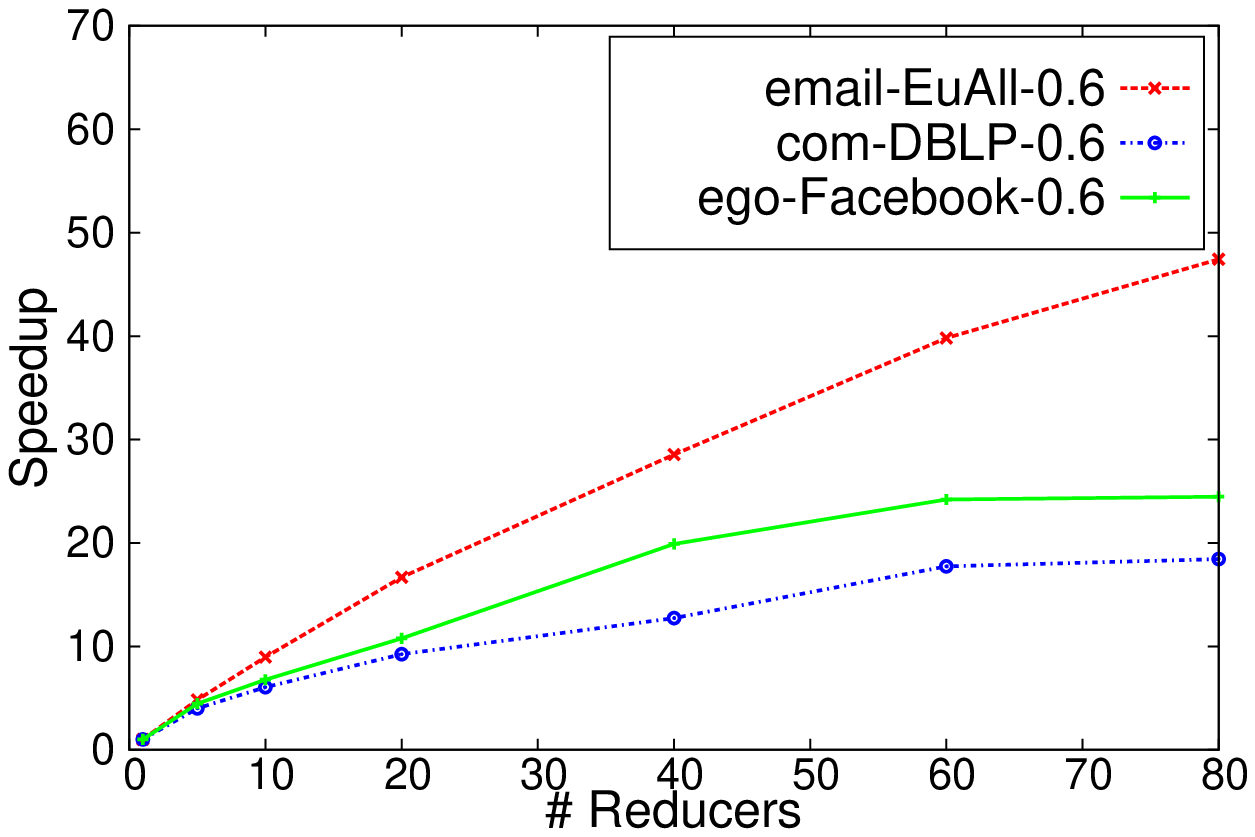}}
\subfloat[Algorithm CD2]{\label{fig:speedup_LBN}\includegraphics[width=0.45\textwidth]{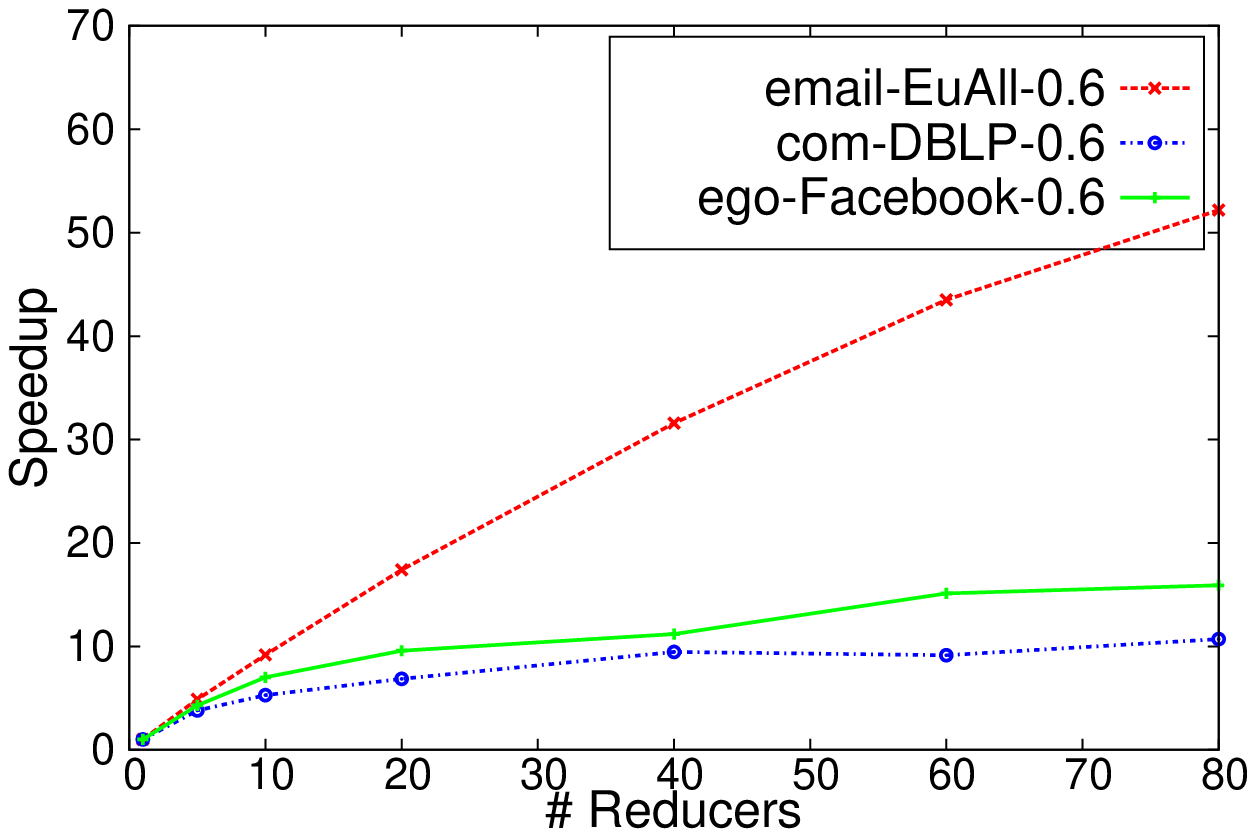}}
\caption{Speedup versus Number of Reducers.} \label{fig:speedup}	
\end{center}
\end{figure*}

\begin{figure*}[!ht]
\begin{center}
\includegraphics[width=0.45\textwidth]{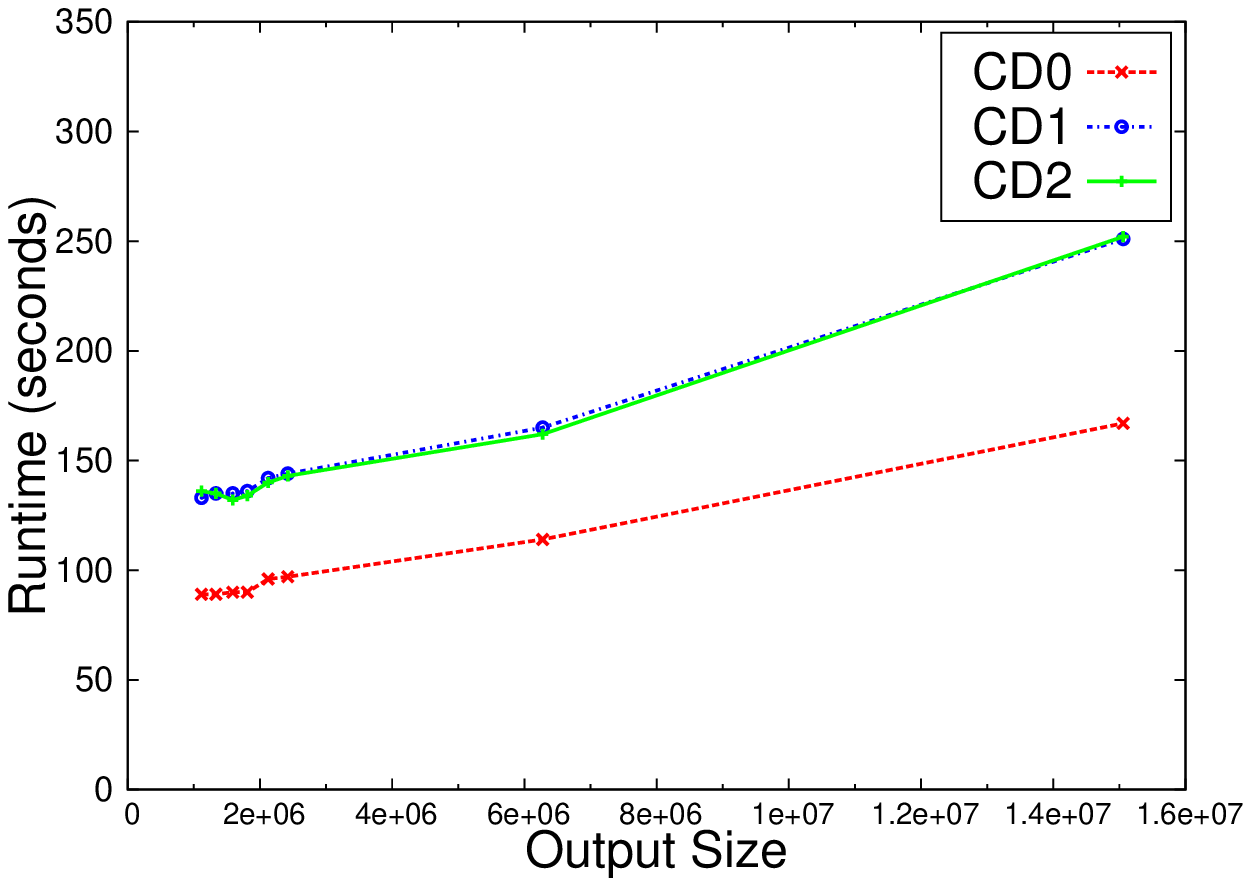}
\caption{Runtime versus Output Size for random graphs. 
All Erdos-Renyi random graphs were used. Output size is defined as the
number of edges summed over all maximal bicliques enumerated.
}
\label{fig:runtime_vs_output}
\end{center}
\end{figure*}

\begin{figure*}[!ht]
\begin{center}
$\begin{array}{cc}

\subfloat[  email-EuAll-0.4 and loc-BrightKite-0.6]{\label{fig:large_maxbicliques-a}\includegraphics[width=0.45\textwidth]{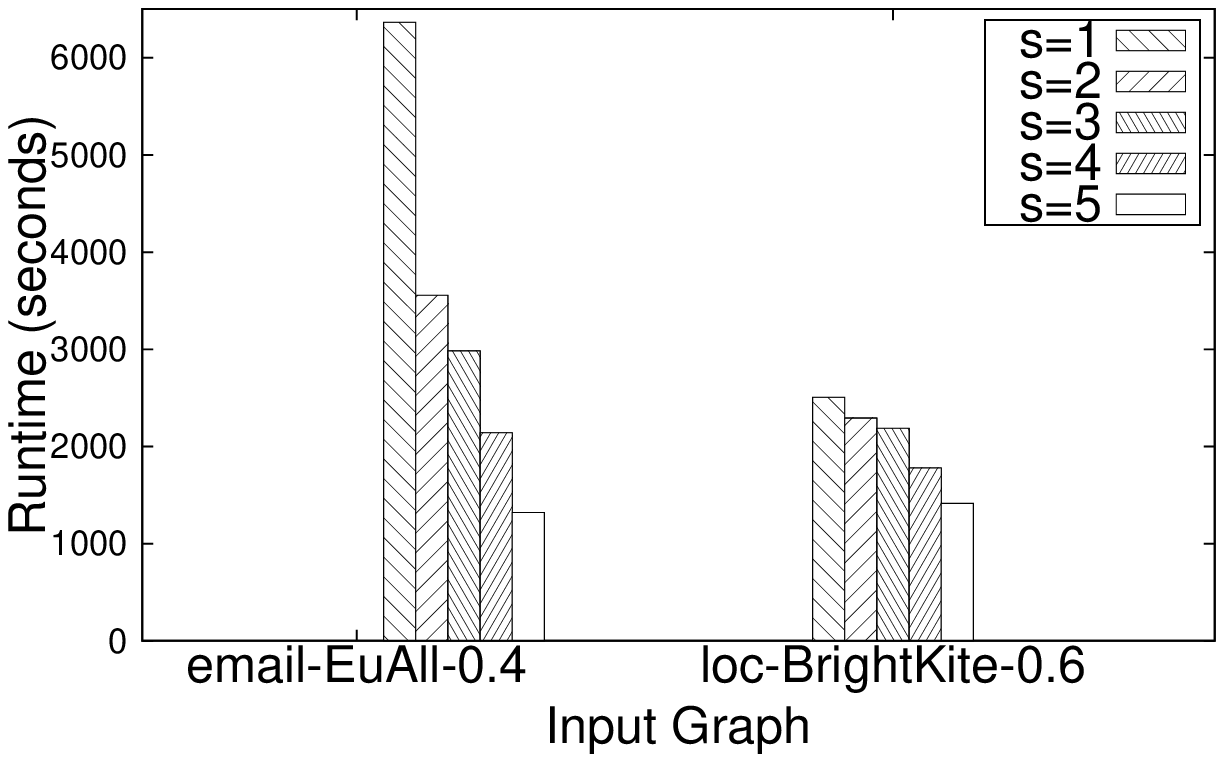}}

\subfloat[ web-NotreDame-0.8 and ego-Facebook-0.6]{\label{fig:large_maxbicliques-b}\includegraphics[width=0.45\textwidth]{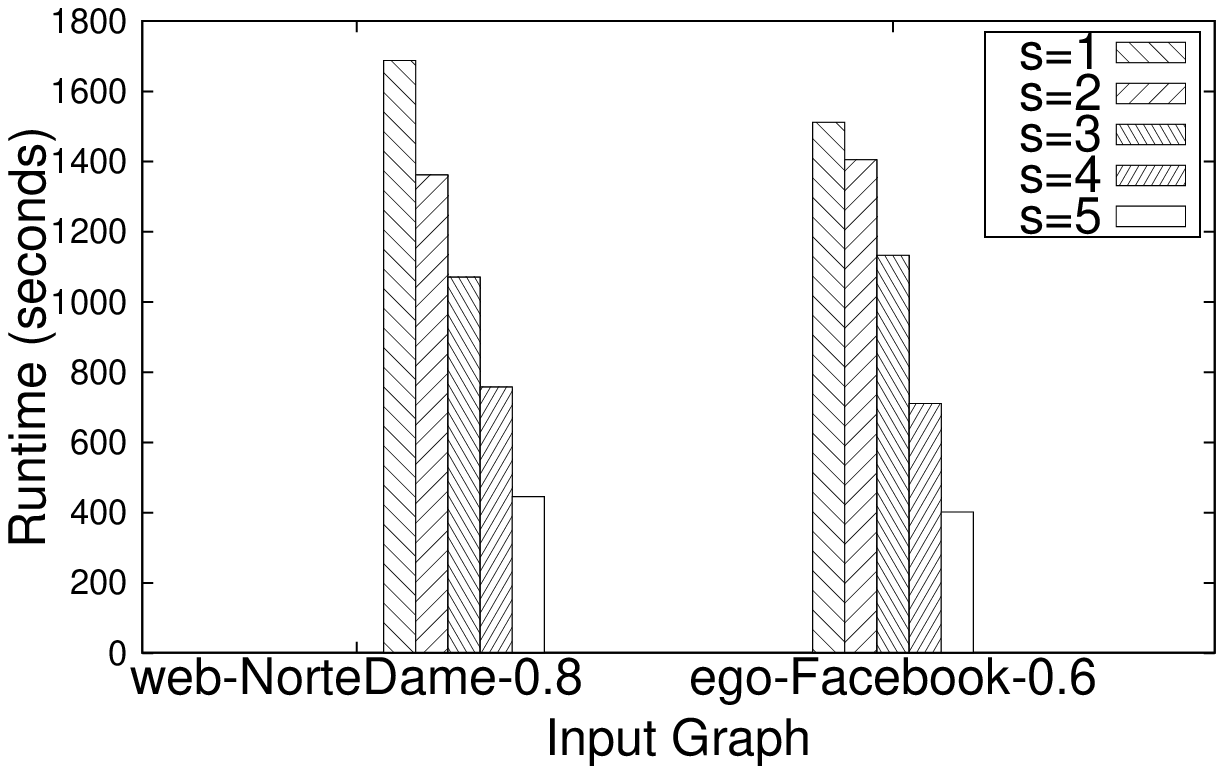}}

\end{array}$
\end{center}
\caption{Runtime vs the size threshold for the emitted maximal bicliques. All experiments were performed using Algorithm CD1 and with 100 reducers.}
\label{fig:large_maxbicliques}
\end{figure*}

We implemented the parallel algorithms on a Hadoop cluster, including
the DFS-based clustering algorithms (CDFS, CD0, CD1, CD2),
consensus-based clustering algorithm, and the parallel consensus
algorithm.
The Hadoop cluster has 28 nodes,
each with a quad-core AMD Opteron processor with 8GB of RAM. All
programs were written using Java and run with 2GB of heap space. The
Java version on all systems was 1.5.0, and the Hadoop version used was
0.20.203.


In addition, we implemented the sequential DFS
algorithm~\cite{LSL2006} (without the optimizations
that we introduced in our work), and the sequential consensus
algorithm (MICA)~\cite{AACFHS2004}. 
The sequential algorithms were not implemented on top of Hadoop
and hence had no associated Hadoop overhead in their runtime. 
But on the real-world graphs that we considered, 
these algorithms did not complete within 12 hours,
except for the p2p-Gnutella09 graph. 

We used both synthetic and real-world graphs for our experiments. A
summary of all the graphs used is shown in Table~\ref{table:inputs}.
The real-world graphs were obtained from the SNAP collection of large
networks~\cite{SNAP} and were drawn from social networks,
collaboration networks, communication networks, product co-purchasing
networks, and internet peer-to-peer networks. Some of the real world
networks were so large and dense that no algorithm was able to process
them. For such graphs, we thinned them down by deleting edges with a
certain probability. This makes the graphs less dense, yet preserves
some of the structure of the real-world graph. We show the edge
deletion probability in the name of the network. For example, graph
``ca-GrQc-0.4'' is obtained from ``ca-GrQc'' by deleting each edge
with probability $0.4$.

Synthetic graphs are either random graphs obtained by the Erdos-Renyi
model~\cite{ER1959}, or random bipartite graphs obtained using a
similar model.
To generate a bipartite graph with $n_1$ and $n_2$
vertices in the two partitions, we randomly assign an edge between
each vertex in the left partition to each vertex in the right
partition. A random Erdos-Renyi graph on $n$ vertices is named
``ER-$<n>$'', and a random bipartite graph with $n_1$ and $n_2$
vertices in the bipartitions is called ``Bipartite-$<n_1>$-$<n_2>$''.

\remove{
generate the general random graphs, for each possible pair of vertices, 
we generated an edge with constant probability $ln(n) / n$,
where $n$ is the number of vertices in the graph. 
For the bipartite random graphs, for each pair of vertices from the \em{left} and \em{right}
set, we generated an edge with probability $ln(n) / n * 5$.
We used a higher probability for the bipartite graphs as for Bipartite graphs many of the possible
edges (between the vertices in the same set) are avoided and hence a higher probability
was used to generate more edges.
Graphs <ER-x> signify input graphs generated randomly with $x$ vertices.
For example, ER-500000 signifies an input graph generated randomly with 500000 vertices.
Similarly graphs <Bipartite-x-y> signify Bipartite input graphs generated randomly with $\left<x,y\right>$ vertices.

Some of the input graphs were artificially generated by using the Erd\H{o}s R\'enyi model of random graphs~\cite{ER1959}, while real world graphs were obtained from the SNAP collection of large networks~\cite{SNAP}. 
Among the random graphs, we had both general random graphs as well as bipartite random graphs.
}


We seek to answer the following questions from the experiments.
\begin{enumerate}
\itemsep0em
\item
What is the relative performance of the different methods for MBE?
\item
How do these methods scale with increasing number of reducers?
\item
How does the runtime depend on the input size and the output size?
\end{enumerate}

\begin{table*}
\caption{Mean and Standard Deviation computation of all 100 reducer 
runtimes for Algorithms CD0, CD1 and CD2.
The analysis is done for the Reducer of the last MapReduce round
as it performs the actual Depth First Search computation. }
\label{table:stdeva}
\centering
\begin{tabular}{c c c c} 
loc-BrightKite-0.6 & CD0 & CD1 & CD2 \\ [0.5ex]
\hline\hline 
Average & 1005.94 & 631.92 & 625.14 \\
Variance & 3470135.82 & 256859.25 & 302764.97 \\
Standard Deviation & 1862.83 & 506.81 & 550.24 \\
\hline
\end{tabular}
\begin{tabular}{c c c c} 
ego-Facebook-0.6 & CD0 & CD1 & CD2 \\ [0.5ex]
\hline\hline 
Average & 479.37 & 380.32 & 422.7 \\
Variance & 473875.57.82 & 80146.79 & 273575.95 \\
Standard Deviation & 688.39 & 283.10 & 523.04 \\
\hline
\end{tabular}
\end{table*}

Figure~\ref{fig:runtime} plots the runtime data for the algorithms
mentioned in Table~\ref{table:inputs}. All data used for these plots
was generated with $100$ reducers.
\emph{The runtime data given for the Parallel Algorithms include the 
time required to run all MapReduce rounds including time required
to construct 2--neighborhood etc}.

\paragraph{Impact of the Pruning Optimization.} From
Figure~\ref{fig:runtime}, we can see that the optimizations to basic
DFS clustering through eliminating redundant work make a significant
impact to the runtime for all input graphs. For instance, in
Figure~\ref{fig:runtime-d}, on input graph email--EuAll--0.6 CD0,
which incorporates these optimizations, runs 9 times faster than CDFS,
the basic clustering approach without reducing redundant work.

\paragraph{Impact of Load Balancing} From Figure~\ref{fig:runtime}, we also
observe that for graphs on which the algorithms do not finish very
quickly (within 200 seconds), load balancing helps significantly. In
Figure~\ref{fig:runtime-d}, for graph email--EuAll--0.6, the Load
Balancing approaches (CD1 and CD2) are 7 -- 7.4 times faster than the
Algorithm without Load Balancing, but optimized to reduce redundant
work (CD0). Other examples include Figures~\ref{fig:runtime-e}, where
for input graph loc--BrightKite--0.6, CD1 was 4.5 times faster than
CD0 and CD2 was about 3.8 times faster. For some graphs, such as
email-EuAll-0.4 and web-NotreDame-0.8, CD0 failed to complete even
after 12 hours, but CD1 and CD2 were successful in processing them
within 2 hours. This shows that there must be significant imbalance in
the load, with subproblems on some dense clusters being much more
expensive to process than others. This shows that {\bf for most input
graphs, the versions optimized through both load balancing and
reducing redundant work worked the best overall}.

However, for graphs that are quickly processed, the load balancing
performs slightly slower than CD0 (see
Figure~\ref{fig:runtime-a}). This can be explained by the additional
overhead of load balancing (an extra round of MapReduce), which does
not payoff unless the work done at the DFS step is significant.

We tried two Load Balancing approaches, one based on the vertex degree
and the other on the size of the 2-neighborhood of the vertex. From
Figure~\ref{fig:runtime} we can observe no one approach was
consistently better than the other, and the performance of the two
were close to each other. For some input graphs, like Email-EuAll-0.4,
the 2-neighborhood approach (CD2) fared better than the degree
approach (CD1), whereas for some other input graphs like
web-NotreDame-0.8, the degree approach fared better.

Finally, to further analyze the impact of Load Balancing,
we also calculated the Mean and Standard Deviation of the run
time of each of the 100 reducers of the Reducer Algorithm~\ref{NCR_DFS} 
for CD0 and Reducer Algorithm~\ref{NCR_DFS_Three} 
for Algorithms CD1 and CD2 respectively. We present results of this
analysis for input graphs loc-BrightKite-0.6 and ego-Facebook-0.6.
Table~\ref{table:stdeva} shows the mean and standard deviation
analysis for the above-mentioned input graphs.
{\bf Observe that the Load Balanced Algorithms CD1 and CD2
have a much less standard deviation for reducer runtimes than CD0.}

\paragraph{Consensus versus Depth First Search} 
Both clustering consensus as well as parallel consensus were much
slower than the clustering DFS approaches. In all instances except for
very small input graphs, clustering consensus was 6-11 times slower
than CD1 and CD2 or worse (in many cases, clustering consensus did not
finish within 12 hours while CD1 and CD2 finished within 1-2 hours).
Further, direct parallel consensus, which uses a different
parallelization strategy as explained in Section~\ref{sec:parallel},
was 13 to 100 times slower than clustering consensus. These numbers
are not plotted in the figures. This shows that {\bf Clustering based
on DFS is the method of choice for parallel MBE.}

\remove{
other instances Except for some random graphs, Clustering Consensus approach took 6-11 times more time to finish
than the Depth First Search approach.  Also, for some input graphs,
the clustering consensus approach failed with Java heap error.  The
direct parallelization of the consensus approach takes an order of
magnitude more time than CD1 or CD2. Hence the results from the
consensus experiments are not shown in the paper.
}


\paragraph{Scaling with Number of Reducers}
We measured how Algorithms CD1 and CD2 scaled with the number of
reducers. In Figure~\ref{fig:scaling} we plot the runtime of CD1 and
CD2 with and increase in the number of reducers.  In
Figure~\ref{fig:speedup}, we also plot the speedup, defined as the
ratio of the time taken with $1$ reducer to the time taken with $r$
reducers, as a function of the number of reducers $r$. We observe that
the runtime decreases with increasing number of reducers, and further,
the algorithms achieve near-linear speedup with the number of
reducers.  This data shows that the algorithms are scalable and may be
used with larger clusters as well.

\paragraph{Relationship to Output Size:}
We also measured the runtime of the algorithms with respect to the
output size. We define the output size of the problem as the sum of
the number of edges for all enumerated maximal bicliques.
Figure~\ref{fig:runtime_vs_output} shows the runtime of algorithms
CD0, CD1, and CD2 as a function of the output size. This data is only
constructed for random graphs, where the different graphs considered
are generated using the same model, and hence have very similar
structure. We observe that {\bf the runtime increases almost linearly
with the output size for all three algorithms CD0, CD1, and CD2}.

With real world graphs, this comparison does not seem as appropriate,
since the different real worlds graphs have completely different
structures; however, we observed that Algorithms CD1 and CD2 have a
near-linear relationship with the output size, even on real world graphs.

\remove{
real--world
graphs from different backgrounds have different structures and hence
it would not be a fair comparison.  Same holds true for a bipartite
graph.  However, both the Algorithms CD1 and CD2 have a linear
increase in runtime with the output size even for the real--world
graphs.

We can make the following observations from our experiments:

\begin{itemize}
\item From Figure~\ref{fig:runtime_vs_output} we can observe that the runtime for the algorithms 
increase almost linearly with with the Output Size. This is true for all the three algorithms CD0, CD1, and CD2. This plot was made only for general random graphs as they follow the same uniform structure.
However, the data available in Table~\ref{table:inputs} clearly shows that the same holds true for real-world graphs when using algorithms CD1 and CD2.

\end{itemize}
}

\remove{
We observe that the
runtime of both the algorithms decrease with the increase in the
number of reducers for all the different input graphs measured.  This
gives us evidence that the algorithms scale well with the increase in
number of reducers.  This is important as this signifies that the
algorithms can be used in small as well as large clusters.
}

\paragraph{Large Maximal Bicliques:}
Finally we measured the runtime of Algorithm CD1 with respect 
to the size threshold $s$. Figure~\ref{fig:large_maxbicliques} 
shows the runtime with respect to various size thresholds 
ranging from 1 to 5. 
{\bf We can observe that the runtime decreases with increase in 
the size threshold $s$.}
For instance, in Figure~\ref{fig:large_maxbicliques-a}, we can observe
for the graph email-EuAll-0.4, the time taken to enumerate 
maximal bicliques larger than size 5, is about 5 times less and the 
time taken for size threshold 3 is about 2.1 times less than that to 
enumerate all maximal bicliques.
Similarly, we can observe from Figure~\ref{fig:large_maxbicliques-b},
that enumeration times are 3.8 and 1.6 times less for size thresholds
5 and 3 respectively for graph web-NotreDame-0.8.

\section{Conclusion}
\label{sec:conclusion}

Mining maximal bicliques is a fundamental tool that is useful in
uncovering dense relationships within data that can be represented as
a graph. We presented scalable parallel methods for mining maximal
bicliques from a large graph. 
We presented a basic clustering framework for parallel enumeration of
bicliques. On top of this, we presented two optimizations, one for
reducing redundant work, and the other for improving load balance,
both of which significantly improved the observed performance. These
algorithms scale well with increasing numbers of reducers.
To our knowledge, our work is the first
to successfully enumerate bicliques from graphs of this size; previous
reported results were mostly sequential methods that worked on much
smaller graphs.

The following directions are interesting for exploration (1)~How does
this approach perform on even larger clusters, and consequently,
larger input graphs? What are the bottlenecks here? and (2)~Are there
further improvements in search space pruning, and load balancing?

\remove{
We present multiple parallel algorithms for the enumeration of all
maximal bicliques in a large graph. From the experimental results we
observe that both show output sensitive behavior. Further the
clustering approach shows much better performance than direct
parallelization. Although we could run our implementations
successfully on large and sparse graphs, it seemed difficult for
graphs with higher density.
}

\bibliographystyle{plainnat}
\bibliography{references}

\end{document}